\newtheorem{corollary}{Corollary}{}
\newtheorem{theorem}{Theorem}{}
\newtheorem{lemma}{Lemma}{}
{} 
\newtheorem{definition}{Definition}{}
\newenvironment{proof}{\textbf{Proof}}{}
\newcommand{\qed}{\ensuremath{\hfill\Box}}
\newcommand{\OPT}{\textsc{Opt}}
\newcommand{\SUB}{\textsc{Sub}}
\newcommand{\ALG}{\textsc{Alg}}
\newcommand{\Alg}{\textsc{Alg}}
\newcommand{\alg}{\textsc{Alg}}
\newcommand{\expected}{\mathbb{E}}
\newcommand{\RR}{\mathbb{R}}
\newcommand{\OO}{\tilde{O}}
\newcommand{\TT}{\mathcal{T}}
\newcommand{\NP}{\ensuremath{\mathcal{NP}}}
\newcommand{\eps}{\epsilon}
\newcommand{\TRP}{\text{Traveling Repairman Problem}}
\begin{document}

\title{\Large Polynomial time approximation schemes for the traveling repairman 
 and other minimum\\ latency problems.}
\author{Ren\'e Sitters \thanks{ Vrije Universiteit, Amsterdam. Department of Econometrics and Operations Research.} }
\date{}

\maketitle

\begin{abstract} 
\small\baselineskip=9pt 
We give a polynomial time, $(1+\epsilon)$-approximation algorithm for the traveling repairman problem (TRP) in the Euclidean plane and on weighted trees. This improves on the known quasi-polynomial time approximation schemes for these problems.
The algorithm is based on a simple technique that reduces the TRP to what we call the \emph{segmented TSP}. Here, we are given numbers $l_1,\dots,l_K$ and $n_1,\dots,n_K$ and we need to find a path that visits at least $n_h$ points within path distance $l_h$ from the starting point for all $h\in\{1,\dots,K\}$.
A solution is $\alpha$-approximate if at least $n_h$ points are visited within distance $\alpha l_h$. It is shown that any algorithm that is $\alpha$-approximate for \emph{every constant} $K$ in some metric space, gives an $\alpha(1+\epsilon)$-approximation for the TRP in the same metric space. Subsequently, approximation schemes are given for this segmented TSP problem in the plane and on weighted trees.
The segmented TSP with only one segment ($K=1$) is equivalent to the $k$-TSP for which a $(2+\epsilon)$-approximation is known for a general metric space. Hence, this approach through the segmented TSP gives new impulse for improving on the 3.59-approximation for TRP in a general metric space.
A similar reduction applies to many other minimum latency problems.
To illustrate the strength of this approach we apply it to the well-studied scheduling problem of minimizing total weighted completion time under precedence constraints, $1|prec|\sum w_{j}C_{j}$, and present a polynomial time approximation scheme for the case of interval order precedence constraints. This improves on the known $3/2$-approximation for this problem.
Both approximation schemes apply as well if release dates are added to the problem. \\

Remark: In the previous version of this report it was claimed that the PTAS applies to TRP on weighted planar graphs too. This turned out to be incorrect. Getting a PTAS or even quasi-PTAS for the TRP on weighted planar graphs remains an open problem. See the discussion on open problems in this report.
\end{abstract}
 
\section{Introduction} 
The traveling repairman problem (TRP) (also known as the minimum latency problem) is similar to the well-known traveling salesman problem (TSP). An instance is given by points in a metric space and a feasible solution is a path $\Pi$, starting at a given origin $r$, that visits each of the points. The \emph{completion time} of a point $v$ is the distance from $r$ to $v$ on path $\Pi$ and the objective is to minimize the total completion time of the points. Hence, the problem can be seen as a traveling repairman who's aim is to minimize the average arrival time at the  clients. The traveling salesman on the other hand aims at minimizing the travel time of the salesman himself.   
 
The approximability of the TRP has been the subject of many papers~\cite{Afrati:86,ArcherBlasiak2010,ArcherWillamson:2003,AroraKara2003,Blum:94,ChaudhuriGRT2003,DewildeCCSV2013,FakchHR2003,GarciaJT:2002,GoemansKleinberg:98,KrumkePPS2003,NagarajanRavi2008,Sitters:02,Tsitsiklis:1992}. The first approximation ratio, given by Blum et al.~\cite{Blum:94}, was $144$ and the current smallest ratio for general metrics is $3.59$ due to  Chaudhuri et al.~\cite{ChaudhuriGRT2003}.
Even for trees, the best polynomial time approximation ratio was $3.59$ until recently Archer and Blasiak reduced it to 3.03~\cite{ArcherBlasiak2010}. $\NP$-hardness of the tree case was shown in~\cite{Sitters:02}. For the Euclidean plane, a $3.59$-approximation follows from the TRP algorithm by Goemans and Kleinberg~\cite{GoemansKleinberg:98} in combination with the polynomial time approximation scheme (PTAS) for the Euclidean $k$-TSP by Arora~\cite{Arora1998}. A quasi-polynomial time approximation scheme (QPTAS) for the TRP on trees and the Euclidean plane was given by Arora and Karakostas~\cite{AroraKara:1999,AroraKara2003}. Arora and Karakostas  write that they `do not know whether the running time can be reduced to polynomial'. Here we show that this is indeed possible. The authors also remark that their QPTAS applies to weighted planar graphs too. However, this claim turned out to be incorrect~\cite{KarakostasCommunication2014}. Hence, getting a PTAS or even en QPTAS is still an open problem.

All known TRP algorithms solve some form of the $k$-TSP or $k$-MST as a subroutine. In the $k$-TSP ($k$-MST), one needs to find the shortest tour (tree) that visits at least $k$ of the $n$ input points.
For example, the algorithm by Goemans and Kleinberg~\cite{GoemansKleinberg:98} first computes approximate $k$-TSP tours for all $k\leqslant n$ and then combines a subset of these tours into one TRP solution. An alternative approach is to make subtours of geometrically increasing length and to visit a maximum number of points in each subtour. The obtained ratio is $3.59\alpha$, where $\alpha$ is the approximation ratio of the $k$-TSP (or $k$-MST). So far, the best ratio for $k$-TSP and $k$-MST is $2+\epsilon$~\cite{AroraKarakostas:2000}. Chaudhuri et al. \cite{ChaudhuriGRT2003} found  a way to bypass this factor 2 in the analysis and noted that  breaking the barrier of 3.59 would probably involve an approach different than combining small tours. Indeed, the exact algorithm for TRP on the line~\cite{Afrati:86} and the quasi-PTAS~\cite{AroraKara2003} for the plane and trees are different since they find a solution directly by one dynamic program.
Our approach here is to combine both ideas. Instead of solving $n$ times a $k$-TSP, we solve a polynomial number of relatively large subtour problems which we call the \emph{segmented TSP}. Dynamic programming is applied to combine a subset of these subtours into one solution. As in~\cite{GoemansKleinberg:98}, the subtours are of geometrically  increasing length. However, the multiplication factor is not an absolute constant but a large constant depending on $\epsilon$. This way, we loose only a $1+\epsilon$ factor due to the returns to the origin. The downside is that the number of subproblems increases as well as their complexity. We show that it is enough for a PTAS to solve only a polynomial number of these subproblems approximately. More precesily, it is shown that any $\alpha$-approximation algorithm for the segmented TSP gives an $\alpha(1+\epsilon)$-approximation for the TRP in the same metric space. Subsequently, we give a PTAS for the segmented TSP on weighted trees and the Euclidean plane. An intersting by-product is that this gives a new direction for improving on the general $3.59$ approximation since any approximation ratio better than $3.59$ for the segmented TSP gives an improved factor for the TRP! The segmented TSP has not been studied before and no constant factor approximation is known for general metric spaces. It seems unlikely though that the factor 3.59 will show up in the analysis of the segmented TSP.

The same approach can be applied to many sequencing problem with minimum total (weighted) completion time objective. This is illustrated by our second application, which is the notorious scheduling problem of minimizing total weighted completion time under precedence constraints, known as $1|prec|\sum w_{j}C_{j}$ in the standard scheduling notation. (See e.g. Graham et al.~\cite{GrahamLLR1979}.) The approximability of this problem has been studied in many papers. The problem is known to be \NP-hard~\cite{Lawler1978,LenstraRK1978} and several 2-approximation algorithms are known. The paper~\cite{AmbuhlMMS2011} gives a recent overview on the status of this problem.
We present a polynomial time approximation scheme for the case of interval order precedence constraints.
Woeginger~\cite{Woeginger2003} gave a 1.62-approximation algorithm and a $3/2$-approximation was given by Amb\"uhl et al~\cite{AmbuhlMMS2011}. Somewhat surprisingly, the same paper shows that scheduling interval orders is in fact \NP-hard. Hence, our PTAS closes the gap in the approximabilty for this problem.

\subsection{Preliminaries}

\begin{definition}
An instance of the traveling repairman problem (TRP) is given by a set $V$ of $n+1$ points, one of them is the origin $r$, and symmetric integer distances $d_{ij}$ satisfying the triangle inequality. A solution is a permutation  $\Pi=(v_0,v_{1},\dots,v_n)$ of $V$, where $v_0=r$. The completion time $C(v)$ of a point $v=v_i$ is the distance from $r$ to $v$ on the path defined by $\Pi$: $C(v_i)=\sum_{j=1}^{i}d_{v_{i-1},v_{i}}$. The goal is to find a solution with minimum total completion time: $\sum_{v\in V} C(v)$.
\end{definition}

With loss of a $(1+\eps)$-factor in the approximation factor, we may assume that all distances are polynomially bounded.
Consequently, the length of the optimal tour is polynomially bounded. More precisely, (see also~\cite{Arora2003}) we may assume that all distances are in $\{0,1,\dots,B\}$ with $B=O(n^2/\eps)$.
We use the notation $\OO(.)$ when $\epsilon$ is assumed constant. For example, $\epsilon n=\OO(n)$ and $n^{1/\eps^2}=n^{\OO(1)}$.

It is convenient for the analysis to see a solution $\Pi$  as a path that is traversed with at most unit speed. 
That means, we assume there is a continues path of length $d_{v_{i-1},v_{i}}$ between  consecutive points $v_{i-1}$ and $v_{i}$ in $\Pi$.  The completion time $C(v)$ of input point $v$ is then defined as the time at which $v$ is visited for the first time. 

In~\cite{AroraKara2003}, the authors note that any solution $\Pi$ can be replaced by a concatenation of $\gamma=O(\log n/\eps)$ TSP-paths with only a $(1+\eps)$ factor increase in value. That means, the solution can be partitioned into $\gamma$ segments such that replacing each segment $S$ by a shortest path that visits the same points as $S$ and has the same start and endpoint as $S$, increases the value of the solution by at most a factor $(1+\eps)$. The proof follows easily by letting the number of points visited by the segments  decrease geometrically. Here, we prove the same lemma through the alternative approach of partitioning the timeline in intervals of geometrically increasing length. We shall not use  Lemma~\ref{lem:lognpaths} directly but will use a similar argument later when we partition the timeline in only a \emph{constant number} of intervals.

\begin{lemma}\label{lem:lognpaths}
With loss of a factor $1+\epsilon$ in the approximation, we may assume that $\OPT$ is a concatenation of~$O\left(\frac{\log n}{\eps}\right)$ TSP-paths.(\cite{AroraKara2003})
\end{lemma}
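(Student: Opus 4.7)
The plan is to partition the time axis of $\OPT$ at the geometric sequence $t_h = (1+\eps)^h$ for $h = 0, 1, \ldots, \gamma$, where $\gamma$ is the smallest index with $t_\gamma$ exceeding the length of $\OPT$. From the preliminary reduction, distances are integers in $\{0, 1, \ldots, B\}$ with $B = O(n^2/\eps)$, so the length of $\OPT$ is $O(n^3/\eps)$ and $\gamma = O(\log n / \eps)$. This fixes the required number of segments up front; the work is to bound the cost loss.

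For each $h$, I would let $p_h$ denote the point on $\OPT$ at time $t_h$ and $V_h$ denote the set of input points whose first visit lies in the interval $(t_{h-1}, t_h]$. The replacement step is to substitute the portion of $\OPT$ in each interval by a shortest path from $p_{h-1}$ to $p_h$ that visits all of $V_h$. Since $\OPT$ itself is such a walk, the triangle inequality gives a replacement of length at most the original; by induction every segment of the new tour ends no later than $t_h$, so every $v \in V_h$ is now completed by time at most $t_h$. Because originally $C(v) > t_{h-1} = t_h/(1+\eps)$, the new completion time is at most $(1+\eps) C(v)$, and summing over $v$ yields the claimed $(1+\eps)$ factor.

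The main subtle point is the first interval, where the multiplicative argument breaks down for completion times smaller than $t_0$. This is exactly where the integrality assumption from the preliminaries pays off: positive completion times are at least $1$, so starting the geometric sequence at $t_0 = 1$ confines the edge case to $C(v) \in \{0, 1\}$. Points with $C(v) = 0$ contribute nothing to the objective and can be placed at the start for free, while points with $C(v) = 1$ lie in a first segment of length $1$ whose shortest-path replacement still completes them by time $1 = C(v)$. A second minor technicality is that the breakpoints $p_h$ may lie in the interior of an edge rather than at an input vertex, which is harmless for this lemma — the TSP-paths are taken in the underlying metric — but is precisely the issue that later forces the paper's segmented TSP approach to enumerate boundary locations when only a constant number of segments is allowed.
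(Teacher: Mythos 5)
Your proof takes essentially the same approach as the paper: partition the time axis at the geometric time-points $(1+\eps)^h$, replace each interval of $\OPT$ by a shortest path through the same points with the same endpoints, and observe that since $C(v) > t_{h-1} = t_h/(1+\eps)$ for any $v$ first visited in the $h$-th interval, the replacement increases $C(v)$ by at most a factor $1+\eps$. You supply more detail than the paper's terse proof (notably the inductive argument that each replacement still ends by $t_h$, and the edge-case discussion at $t_0=1$ using integrality of distances), but the decomposition and the accounting are the same.
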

\begin{proof}
Consider time-points $1,(1+\eps),(1+\eps)^2,\dots, (1+\eps)^{\gamma}$, where $(1+\eps)^{\gamma}=O( n^3/\eps)$ is an upper bound on the length of the optimal tour. Then, ${\gamma}=O((\log n)/\eps)$. Now, replace the path between any two consecutive time-points by a TSP-path. The completion time of any point is increased by at most a factor $1+\eps$.  \qed
\end{proof}

By Lemma~\ref{lem:lognpaths}, it is enough to restrict to solutions composed of $\gamma=O\left((\log n)/\eps\right)$ TSP-paths.
In~\cite{AroraKara2003}, a solution composed of at most $\gamma$ TSP-paths is found by one dynamic program. Consequently, the $\log n$ shows up naturally in the exponent of the running time.
A simple example on the line shows that $\Omega(\log n)$ paths are needed for a PTAS: Let $n=2^k-1$ and place $2^{k-i}$ points at $x=(-2)^i$, for $i=1,\dots,k$. For this example, there is no constant approximate solution that is a concatenation of $o(\log n)$ TSP-paths.  
Hence, if we stick with the TSP-paths approach then the only way to improve on the running time is to have a better understanding of the dependency between the paths.
The key inside in our approach is that the TSP-paths can be clustered in groups of $K$ consecutive TSP-paths each, where $K$ is a  constant which depends on $\epsilon$ only, and such that there is only very limited dependency between the groups. That means the problem
on $O(\log n)$ TSP paths basically reduces to a problem on $K$ TSP paths.
Consequently, known TSP algorithms can be modified for these subproblems on $K$ TSP-paths. The dependency is limited in the sense that it is enough  to solve only a polynomial number of these subproblems. Then, dynamic programming is used to combine solutions for subproblems into one tour.

\subsection{Segmented TSP}
What we shall denote as the  segmented TSP is a generalization of the known $k$-TSP in which one needs to find, for a given TSP-instance and number $k\leqslant n$, a tour of minimal length that visits at least $k$ points. A $(2+\eps)$-approximation was given by Arora and Karakostas~\cite{AroraKarakostas:2000}. The problem can be solved exactly on a tree metric and a PTAS is known for Euclidean spaces of fixed dimension \cite{Arora1998}. For our PTAS, we need a more general problem that we denote by segmented TSP. It corresponds with the $k$-TSP problem for $K=1$.

\begin{definition}
 An instance of segmented TSP is given by a set $V$ of  $n+1$ points, one of them is the origin, and symmetric integer distances $d_{ij}$ satisfying the triangle inequality. Also  given are numbers $l_1\leqslant l_2\leqslant \dots \leqslant l_K$ and numbers $n_1\leqslant n_2\leqslant \dots \leqslant n_K$. A solution is a tour that starts and ends in the origin
 such that at least $n_h$ vertices are visited within the first $l_h$ distance for all $h\in \{1,\dots,K\}$ and such that the length of the tour is at most $l_K$.
  We say that an algorithm solves the problem if it always finds a solution if one exists. We say that an algorithm is an $\alpha$-approximation ($\alpha\geqslant 1$) if for any feasible instance it finds a tour that visits at least $n_h$ vertices within path-distance $\alpha l_h$ for all $h$ and such that its length is at most $\alpha l_K$.
\end{definition}

NB. One may also consider the segmented TSP without the restriction that the solution must end in the origin. This restriction is convenient for our purpose.

\begin{theorem}\label{th:main}
If, for any metric space, there is a polynomial time $\alpha$-approximation algorithm for the segmented TSP for every constant number of segments, then there is a polynomial time $\alpha(1+\eps)$-approximation algorithm for the Traveling Repairman Problem in the same metric space for every constant $\epsilon$.
\end{theorem}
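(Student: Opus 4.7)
My plan is to reduce the TRP to a single segmented TSP with $\gamma=O(\log n/\eps)$ segments and then solve that $\gamma$-segment instance via a dynamic program that makes polynomially many calls to the hypothesized $\alpha$-approximation for $K$-segment segmented TSP, where $K=K(\eps)$ is a sufficiently large constant.

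First I would apply Lemma~\ref{lem:lognpaths} to assume, at a multiplicative $1+\eps$ cost, that $\OPT$ is a concatenation of $\gamma$ TSP-paths ending at times $t_i=(1+\eps)^i$. Writing $n_i^*$ for the number of points completed by $\OPT$ by time $t_i$, I would show that any tour $\Pi$ that visits at least $n_i^*$ points within distance $\alpha t_i$ for every $i$ has $\sum_v C^\Pi(v)\le\alpha(1+\eps)\,\OPT$. The argument is a short integral: using $\sum_v C(v)=\int_0^\infty (n-N(t))\,dt$ on the geometric partition yields
\[
\sum_v C^\Pi(v)\le\alpha\sum_i (t_i-t_{i-1})(n-n_{i-1}^*),\qquad \OPT\ge\sum_i (t_i-t_{i-1})(n-n_i^*),
\]
and the difference $\sum_i (t_i-t_{i-1})\Delta n_i^*=\eps\sum_i t_{i-1}\Delta n_i^*$ is at most $\eps\cdot\OPT$ because every point's $\OPT$-completion is at least the start of its segment.

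The second step is to actually produce such a tour using only the $K$-segment algorithm. Pick $K$ so that $(1+\eps)^K\ge 2/\eps$, group the $\gamma$ segments into $\Gamma=\gamma/K$ super-intervals of $K$ consecutive segments each, and for each super-interval $g$ set up a $K$-segment segmented TSP sub-instance with length bounds $l_{g,h}=t_{(g-1)K+h}$ for $h<K$ and $l_{g,K}=2R_g$ with $R_g=t_{gK}$; the count bounds $n_{g,h}$ equal the incremental $\OPT$ counts in that super-interval. Raising $l_{g,K}$ to $2R_g$ ensures feasibility via $\OPT$'s sub-path extended into an origin-to-origin tour. A dynamic program whose state is the pair (super-interval index $g$, cumulative count $N_g$) then makes polynomially many calls to the $K$-segment algorithm and glues the resulting sub-tours together.

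The main obstacle is the loss from the origin-return at every super-interval boundary. Each sub-tour has length at most $2\alpha R_g$, so the cumulative TRP-time at the start of super-interval $g$ is bounded by $\sum_{g'<g} 2\alpha R_{g'}\le 2\alpha R_{g-1}/(1-(1+\eps)^{-K})\le\eps\,\alpha R_g$ thanks to the choice of $K$. Consequently the TRP-time at which super-interval $g$'s sub-tour reaches its $h$-th count milestone is at most $\alpha(1+\eps)\,t_{(g-1)K+h}$, and substituting into the integration of the first step gives $\alpha(1+\eps)^2\OPT$, which is $\alpha(1+\eps')$ after rescaling $\eps$. The subtlest point, which I would verify most carefully, is that the $h=K$ milestone of each super-interval, whose length bound was inflated to $2R_g$, does not spoil the bound: this is absorbed by designing the sub-instance so that all incremental points of super-interval $g$ are covered by the first half of the tour (the \emph{outward} portion of length $R_g$), using the fact that $\OPT$'s portion of super-interval $g$ can be unrolled from the origin into a path of length at most $R_g$ before the return is appended.
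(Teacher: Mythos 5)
Your reduction has two genuine gaps, and each is precisely what a distinctive ingredient of the paper's construction is there to fix.

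The first, and more fundamental, is distinctness. You set up each super-interval's sub-instance so that it visits only the \emph{incremental} $\OPT$ counts, and your DP state is (super-interval index, cumulative count). But the hypothesized $\alpha$-approximation returns an arbitrary tour meeting the prescribed count bounds; nothing forces the point sets visited by consecutive sub-tours to be disjoint, so the glued tour can cover far fewer than $n$ distinct points, and a polynomial-size DP state cannot record which points have already been seen. The paper avoids this by making subtour $i$ visit the full \emph{cumulative} count $m_i$, i.e., it deliberately revisits all points from earlier subtours while charging the objective only to the new indices $m_{i-1}+1,\dots,m_i$; then the last subtour has $m_\Gamma=n$ and coverage is automatic. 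This is exactly why the paper requires $A_i/A_{i-1}\ge 3$ (so $\OPT_i$ has room to replay the shorter prefix $\OPT_{i-1}$ from scratch and still return to the origin before $t_{i+1}$), and it is the reason property (ii) of Lemma~\ref{lem:properties} (monotone counts) suffices as the only interface between subtours.

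The second gap is the delay bound. You derive $\tau_g\le \eps\alpha R_g$ for the start time of super-interval $g$, but what you need is $\tau_g\le \eps\alpha\, t_{(g-1)K+h}$ for each $h\ge 1$. For small $h$ one has $t_{(g-1)K+h}\approx R_{g-1}=R_g/(1+\eps)^K$, and since $\tau_g\approx 2\alpha R_{g-1}$, the $h$-th milestone is hit at time $\approx 3\alpha\, t_{(g-1)K+h}$, a constant overshoot rather than $\alpha(1+\eps)$. Points landing in the first segment of each super-interval therefore suffer a factor near $3$, and they can carry a constant fraction of $\OPT$. No deterministic choice of $K$ removes this: whatever $K$ you pick, there will always be points whose $\OPT$-completion falls just after a super-interval boundary, and those points pay the full return-to-origin overhead. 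The paper handles this with the random offset $h_0\in\{0,\dots,K-1\}$ in Equation~\eqref{eq:h0}: for any fixed point $v$, the overhang $t_{i'}/C(v)$ is $O(1)$ in the worst case but only $O(1/(K\eps))$ in expectation over $h_0$, which is $\le\eps$ for $K=\Theta(1/\eps^2)$ (Lemma~\ref{lem:decomposition}). You would need to incorporate the same averaging device.

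Your first step (the integral bound showing that hitting $n_i^*$ points by $\alpha t_i$ for all $i$ yields cost $\le\alpha(1+\eps)\OPT$) is correct and plays a role similar to Lemma~\ref{lem:CT le lambda CI}, but the construction of the tour meeting those milestones is where the argument breaks.
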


In Section~\ref{sec:K-TSP},  we show that the segmented TSP with a constant number of segments can be solved exactly for weighted trees and show that there is a PTAS for the Euclidean plane.
\begin{corollary}
There exists a PTAS for the (unweighted) Traveling Repairman Problem in the Euclidean plane and for edge-weighted  trees.
\end{corollary}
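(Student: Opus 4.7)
The corollary will follow by plugging concrete segmented TSP algorithms into Theorem~\ref{th:main}: it suffices, for each constant $K$, to exhibit a polynomial-time exact algorithm for segmented TSP on weighted trees and a polynomial-time $(1+\eps)$-approximation for segmented TSP in the Euclidean plane. Theorem~\ref{th:main} then upgrades these into PTASs for TRP, since $K$ depends only on $\eps$. So the real work in the proof is building these two segmented-TSP subroutines; the rest is bookkeeping to quote Theorem~\ref{th:main}.

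For \textbf{weighted trees} my plan is a rooted dynamic program. Root the tree at the origin $r$ and, since any tour in a tree is a closed walk that traverses each used edge an even number of times, represent a candidate solution on a subtree $T_v$ by (i) the set of \emph{excursion intervals} during which the walker is in $T_v$, and (ii) the entry/exit times of these excursions through $v$. Because the number of excursions a near-optimal tour makes into $T_v$ within the finitely many relevant time windows can be bounded in terms of $K$, the DP state can be taken to be the vector of arrival times at $v$ (rounded to a polynomial-size grid, costing an extra $(1+\eps)$ factor that we can absorb by scaling constants), together with, for each $h\in\{1,\dots,K\}$, the number of points of $T_v$ already visited by time $l_h$. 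Children are merged by interleaving their excursion profiles, which is a constant-size combinatorial step for constant $K$. The state space is polynomial and the transitions are polynomial, giving the desired algorithm; with distances polynomially bounded (Preliminaries) the discretization is lossless up to $(1+\eps)$, which is absorbed into $\alpha$ in Theorem~\ref{th:main}.

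For the \textbf{Euclidean plane} the plan is to graft the segment constraints onto Arora's PTAS for TSP/$k$-TSP. Apply a randomly shifted quadtree dissection and impose the usual portal and patching conditions, so that the optimal tour is, up to a $(1+\eps)$ factor, \emph{$(m,r)$-light}: it crosses each dissection square only through $m=O(\log n/\eps)$ portals and uses few crossings per side. The Arora dynamic program has one entry per square, listing how the tour enters and exits at portals. Augment this state by a $K$-tuple $(a_1,\dots,a_K)$, where $a_h$ is the number of input points in the square that are visited before the $h$-th threshold. Since a near-optimal tour can be assumed to cross each square only $O(m)$ times, we can guess the time-interval assignment of each crossing (one of $K$ possibilities per crossing, constantly many in total). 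This blows the Arora table up by a factor $n^{O(K)}$ — polynomial for constant $K$ — and the DP merges children consistently by adding their $a_h$-vectors and checking that each portal crossing happens in the guessed interval. Feasibility with respect to $l_h$ is verified at the root.

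The main obstacle I anticipate is in the Euclidean case: making the notion of ``the $h$-th threshold'' compatible with Arora's locally-defined state, since $l_h$ is a \emph{global} time measured along the tour, not a local one inside a dissection square. The fix will be to guess, at each portal crossing, the time interval $h$ to which that crossing belongs, and then check at the root that the guessed intervals glue into a consistent global schedule whose lengths satisfy $\sum (\text{within interval } h)\leq l_h(1+\eps)$. As long as $K$ is constant, only constantly many such guesses per portal are needed, which keeps the overall running time polynomial, completing the PTAS and hence, through Theorem~\ref{th:main}, the corollary. \qed
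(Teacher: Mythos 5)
Your proposal follows the paper's route exactly: it applies Theorem~\ref{th:main} after supplying, for each constant $K$, an exact dynamic program for segmented TSP on weighted trees and an Arora-style quadtree DP for a $(1+\eps)$-approximate segmented TSP in the plane, with the state augmented by per-segment point counts and per-crossing segment membership. One technical point your sketch glosses over and the paper must address explicitly: the usual randomized-shift argument only bounds the \emph{expected} detour of each segment by $\eps\lambda_i$, whereas here you need all $K$ segments to stay near-length \emph{simultaneously}; the paper fixes this with a Markov-plus-union-bound step (yielding a $(1+2K\eps)$-approximation with probability at least $1/2$), and some version of that step is needed in your Euclidean argument as well.
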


The  following useful definition and lemma  apply to the segmented TSP  in general and are used in Section~\ref{sec:Reducing-to-Ktsp}.
\begin{definition}\label{def:CIj}
Let $I$ be a segmented TSP instance. The  $j$-th completion time of $I$ is denoted by $C_j^I$ and is defined as follows. The first $n_1$ completion times are $l_1$, the next $n_2-n_1$ completion times are $l_2$, and so on.
\end{definition}

Note that $C_j^I$ is an upper bound on the $j$-th completion time in any feasible solution for $I$. The following lemma is immediate.
\begin{lemma}\label{lem:CT le lambda CI}
Let $\TT$ be  a $\alpha$-approximate solution for segmented TSP instance $I$ and denote the $j$-th completion time in $\TT$ by  $C^{\TT}_j$. Then,
 $C^{\TT}_j\leqslant \alpha C_j^I\text{ for any $j$}$.
\end{lemma}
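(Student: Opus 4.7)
The plan is to unpack the definition of $C_j^I$ and directly match it against the $\alpha$-approximation guarantee. Fix an index $j$ and let $h$ be the unique index with $n_{h-1} < j \leqslant n_h$, where we set $n_0 := 0$. By Definition~\ref{def:CIj}, $C_j^I = l_h$, since the completion times are assigned in blocks of size $n_h - n_{h-1}$ at level $l_h$.

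Next I would invoke the $\alpha$-approximation property of $\TT$: by assumption, $\TT$ visits at least $n_h$ vertices within path-distance $\alpha l_h$ from the origin. Since $j \leqslant n_h$, the $j$-th visited vertex along $\TT$ must be among these, hence its completion time satisfies $C_j^{\TT} \leqslant \alpha l_h = \alpha C_j^I$, giving the claim.

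There is really no obstacle here; the lemma is a one-line consequence of the way $C_j^I$ is defined to be the ``worst-case'' position of the $j$-th completion time across all feasible solutions, combined with the fact that an $\alpha$-approximation slackens each threshold $l_h$ by exactly a factor of $\alpha$. The only thing to be slightly careful about is the boundary bookkeeping (introducing $n_0 = 0$ and choosing $h$ as the smallest index with $j \leqslant n_h$), but this is purely notational.
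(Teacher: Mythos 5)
Your proof is correct and is exactly the ``immediate'' unpacking of Definition~\ref{def:CIj} and the definition of an $\alpha$-approximate solution that the paper has in mind; the paper itself offers no written proof, merely noting the lemma is immediate. Nothing to add.
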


\section{Reducing TRP to segmented TSP}\label{sec:Reducing-to-Ktsp}
The reduction is done by the following steps. First, it is  shown that we may restrict to solutions that return in the origin at time points $t_i$, where $t_i/t_{i-1}=(1+\eps)^K$ for some large $K$ depending on $\eps$ only. For this, we use a simple probabilistic  argument. The part of the tour between time points $t_{i}$ and $t_{i+1}$ is called the $i$-th. \emph{subtour}.
Each of these $\Gamma=\OO(\log n)$  subtours can be partitioned into $K$ subpaths where the ratio of end time and start time of each path is $1+\eps$. We call these subpaths \emph{segments}. In the optimization, we may approximate the completion time of a point by the endpoint of the segment that it is on.
If we would know for each subtour the points to be visited, then an approximate solution can easily be computed given a segmented TSP algorithm.
Clearly, we cannot afford to guess these subsets. However, as we show in this section, for large enough $K$, we can afford to revisit in subtour $i$, all points that were visited in the preceding subtours.
Consequently, in the dynamic programming there is no need to keep track of \emph{subsets} of points and we only need to enumerate over the \emph{number} of points visited.
This requires only a polynomial number of segmented TSP instances to solve.
Remarkably, the number $\Gamma$ of subtours is not dominating the running time, which is only polynomial in $\Gamma$.

\subsection{Restricting the solution space}\label{sec:restricting solution}

Assume $0<\eps\leqslant 1$ and let $K$ be an integer depending on $\epsilon$ only.  To simplify notation, we write $\delta=1+\eps$. Choose $h_0$ uniformly at random in $\{0,1,\dots,K\!-\!1\}$ and let
\begin{equation}\label{eq:h0}
A_i=\delta^{(i-1)K+h_0}, \text{ for }i\geqslant 0.
\end{equation}
Consider an optimal solution, $\OPT$, and let $L$ be its length, i.e., the largest completion time. 
(In general, we denote by $\OPT$ the solution itself as well as its value.)  
For $i\geqslant 1$, let $\OPT_i$ be the solution restricted to the first length $A_i$. (Note in particular that $\OPT_1$ has length $\delta^{h_{0}}$.) Let $\Gamma$ be the smallest integer such that  $A_{\Gamma}\geqslant L$. Hence, we may assume\footnote{More precisely, we have  $A_{\Gamma}\geqslant L=O(n^3/\eps)$, where $A_{\Gamma}\ge(1+\eps)^{({\Gamma}-1)K}$.
Hence, $\Gamma K=O(\log_{1+\eps} (n^3/\eps))$ $\Rightarrow$  $\Gamma=O(\log n /(\eps K)=O(\eps\log n)$ if we take $K=\Theta(1/\eps^2)$.}
 $\Gamma=\OO(\log n)$.
 The modified solution $\OPT'$ is defined as follows:\bigskip

\noindent $\OPT'$: For $i=1$ to $\Gamma$, start $\OPT_i$ at time $t_i:=3A_{i-1}$ and return to the origin.\bigskip

(The constant 3 above may be replaced by any constant strictly larger than 2 for the proof to work.)
 Let $v$ be an arbitrary point of the instance and let $C(v)$ and $C'(v)$ be its completion time in, respectively, $\OPT$ and $\OPT'$, where the completion time is the first moment that the point is visited. Let $\expected[\OPT']$ be the expected value of $\OPT'$ over the random choice of $h_0$.
\begin{lemma}\label{lem:decomposition}
For large enough $K=O(1/\eps^2)$, it holds that
$\expected[C'(v)]\leqslant (1+\eps)C(v)$ for any input point $v$. Hence,
$\expected[\OPT']\leqslant (1+\eps)\OPT$.

\end{lemma}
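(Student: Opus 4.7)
The plan is to fix an arbitrary input point $v$ with completion time $c := C(v)$ in $\OPT$, bound $C'(v)$ pointwise (given $h_0$) in terms of $c$ and $A_{i-1}$, and then exploit the uniform random shift $h_0$ to show that the overhead $3A_{i-1}$ is small in expectation.

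First I would observe that since $A_\Gamma \geq L$, there is a smallest index $i=i(h_0)$ with $A_i \geq c$. Point $v$ is visited in $\OPT_i$ (or even earlier), and within the restarted copy of $\OPT_i$ its relative visit-time is at most $c$; since $\OPT_i$ starts at $t_i=3A_{i-1}$, this gives the deterministic bound
\[
C'(v) \;\leq\; 3A_{i-1} + c.
\]

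The heart of the proof is then estimating $\expected[A_{i-1}]/c$. Here I would use that $A_j = \delta^{(j-1)K+h_0}$, so writing $\log_\delta c = qK+r$ with $r\in[0,K)$, a short case-split shows
\[
A_{i-1}/c \;=\; \delta^{h_0-r}\ \text{if $h_0<r$,}\qquad A_{i-1}/c \;=\; \delta^{h_0-r-K}\ \text{if $h_0\geq r$.}
\]
Summing the two geometric series over $h_0\in\{0,\dots,K-1\}$ and dividing by $K$ yields
\[
\expected\!\left[\tfrac{A_{i-1}}{c}\right] \;=\; \frac{\delta^{\lceil r\rceil - r}(1-\delta^{-K})}{K(\delta-1)} \;\leq\; \frac{\delta}{K\eps}.
\]
Combining with the pointwise bound gives
\[
\expected[C'(v)] \;\leq\; c\left(1+\frac{3(1+\eps)}{K\eps}\right).
\]
Choosing $K \geq 6/\eps^2$ (which is $O(1/\eps^2)$ for $\eps\leq 1$) makes the bracket at most $1+\eps$, and the per-point inequality $\expected[C'(v)]\leq(1+\eps)C(v)$ follows. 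Summing over all input points and applying linearity of expectation yields $\expected[\OPT']\leq(1+\eps)\OPT$.

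The only step requiring care is the averaging calculation: one must see that, although for any \emph{fixed} $h_0$ the ratio $A_{i-1}/c$ can be as large as nearly $1$ (when $c$ sits just above $A_{i-1}$), the random shift smooths this out over a full geometric period, so the average ratio is only $\Theta(1/(K\eps))$. Everything else is essentially bookkeeping: verifying that $\OPT_i$ is indeed started before $v$ is visited in $\OPT$ (automatic since $A_i\geq c$), and that the constant $3$ in $t_i=3A_{i-1}$ leaves enough room to account for the return trip to the origin plus the length of $\OPT_i$ itself — this works for any constant strictly larger than $2$, as noted in the construction.
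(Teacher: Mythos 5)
Your proof is correct and follows essentially the same approach as the paper's: construct $\OPT'$ by restarting the prefixes $\OPT_i$ at times $t_i = 3A_{i-1}$, note that $C'(v) = 3A_{i-1} + C(v)$ where $i$ is the first index with $A_i \geq C(v)$, and exploit the uniform shift $h_0$ to show the expected overhead factor is $O(1/(K\eps))$. The only cosmetic difference is bookkeeping: the paper rounds $C(v)$ up to a power $\delta^q$ and observes that the exponent of $A_i$ is then uniform over $\{q,\dots,q+K-1\}$, giving a single geometric sum, whereas you keep the fractional part $r$ of $\log_\delta C(v)$ and do a two-case split — both compute the same expectation.
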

\begin{proof}
Feasibility holds if it is possible to return to the origin after each $\OPT_i$ before beginning the next path $\OPT_{i+1}$ at time $t_{i+1}$. This is clearly true if $t_{i}+2A_i\leqslant t_{i+1}$ for all $i$. Since $t_i=3A_{i-1}$, this is equivalent with \[
3A_{i-1}+2A_i\leqslant 3A_{i}\ \Leftrightarrow A_i/A_{i-1}\geqslant 3 \Leftrightarrow \delta^K\geqslant 3.\]
Hence, for feasibility it is enough to take $K=O(1/\log \delta )=O(1/\eps)$.
Now, let us compute the expected value of $\OPT'$.  Consider an arbitrary point $v$ of the instance and let $i'$ be the smallest index such that $A_{i'}\geqslant C(v)$, that means, point $v$ is visited in $\OPT'$ for the first time by path $\OPT_{i'}$. Let $\delta^{q-1}<C(v)\leqslant \delta^{q}$, for some integer $q\geqslant 0$. (Note that $q\geqslant 0$ since the minimum distance and hence the minimum completion time is at least 1.) Then the expected value of $A_{i'}$ is
\begin{equation*}\label{eq:expected length}
\expected[A_{i'}]=\frac{1}{K}\sum_{h=0}^{K-1} \delta^{q+h}=\frac{\delta^{q+K}-\delta^q}{K(\delta-1)}<\frac{\delta^{q+K}}{K(\delta-1)}.
\end{equation*}
Remember that $t_{i'}=3A_{i'-1}=3\delta^{-K}A_{i'}$ and note that $t_{i'}=C'(v)-C(v)$. Hence,
\begin{eqnarray*}\label{eq:expected Cj}
\expected[C(v')]-C(v)&=&\expected[t_{i'}]=\frac{3}{\delta^K}\expected[A_{i'}]
<\frac{3}{\delta^K}\frac{\delta^{q+K}}{(\delta-1)K} \\
&=&\frac{3\delta^q}{(\delta-1)K}<\frac{3\delta}{(\delta-1)K}C(v).
\end{eqnarray*}
It follows that
$\expected[C'(v)]\leqslant (1+\eps)C(v)$, for any point $v$ if $\frac{3\delta}{(\delta-1)K}\leqslant \eps$, i.e., if
\[K\geqslant \frac{3\delta}{ \epsilon(\delta-1)}=\frac{3(1+\eps)}{\eps^2}=O\left(\frac{1}{\eps^2}\right).\]
\qed
\end{proof}

Since $\expected[\OPT']\leqslant (1+\eps)\OPT$ there must be some $h_0$ for which the corresponding deterministic solution $\OPT'$ satisfies  $\OPT'\leqslant (1+\eps)\OPT$. From now on  we consider $\OPT'$ to be this deterministic solution.
For $j=1,\dots,n$, let $D_j$ be the $j$-th completion time of $\OPT'$ (where we only consider the first appearance of each point). Equivalently, we can define $D_{j}$ as the completion time of the $j$-th point on the first  subtour that visits at least $j$ points.
The properties of solution $\OPT'$ are listed in the next lemma.
\begin{lemma}\label{lem:properties}
Solution $\OPT'$ has the properties:	
\begin{enumerate}
\item[(i)] It is in the origin at time $t_i=3\delta^{(i-2)K+h_0}$ for  all $i=1,2,\dots, \Gamma$, where $K=O\left(\frac{1}{\eps^2}\right)$, $\Gamma=O(\eps\log n)$ and $h_0$ is some fixed number in $\{0,1,\dots, K-1\}$. We call the tour between $t_i$ and $t_{i+1}$ the $i$-th subtour.
\item[(ii)] The number of points on the $i$ th. subtour is non-decreasing in $i$.
\item[(iii)] For $j=1,\dots,n$, let $D_j$ be the completion time of the $j$-th point on the first  subtour that visits at least $j$ points. Then $\sum_{j=1}^n D_j\leqslant (1+\eps)\OPT$.
\end{enumerate}
\end{lemma}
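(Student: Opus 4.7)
The plan is to handle the three properties in order, since each reduces to a direct reading of the construction of $\OPT'$ together with facts already established in Lemma~\ref{lem:decomposition}. Part (i) is immediate by unpacking the construction: $\OPT'$ launches $\OPT_i$ at $t_i := 3A_{i-1}$, and since $A_{i-1} = \delta^{(i-2)K + h_0}$ we get the stated formula. That $\OPT'$ is actually back at the origin at each such $t_i$ follows from the feasibility computation inside the proof of Lemma~\ref{lem:decomposition}, which reduces to $\delta^K \geq 3$; this is comfortably satisfied by $K = \Theta(1/\eps^2)$, and the bound $\Gamma = \OO(\log n)$ is already spelled out in the footnote. For part (ii), note that $\OPT_i$ is by definition $\OPT$ truncated to its first length $A_i$, so the $i$-th subtour visits exactly those input points that $\OPT$ reaches within distance $A_i$; since $A_1 < A_2 < \cdots$, these counts are non-decreasing in $i$.

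For part (iii), the plan is to show $\sum_j D_j = \sum_v C'(v) = \OPT'$ and then invoke Lemma~\ref{lem:decomposition} with the deterministic choice of $h_0$ already fixed above the statement. The substantive step is the first equality. For each input point $v$, its first-visit time in $\OPT'$ is $C'(v) = t_{i'} + C(v)$ where $i'$ is the smallest index with $A_{i'} \geq C(v)$, because $\OPT_{i'}$ retraces the initial portion of $\OPT$ up through $v$. Using that the subtour time-windows $[t_i, t_i + A_i]$ are disjoint and chronologically ordered, sorting input points by increasing $C'(v)$ groups them into successive blocks, one per subtour, where block $i$ consists of the new points contributed by subtour $i$. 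Under this sorted ordering the $j$-th input point lies in the first subtour that reaches at least $j$ points, and its first-visit time matches the definition of $D_j$. Summing over $j$ then yields $\OPT'$, which is at most $(1+\eps)\OPT$ by Lemma~\ref{lem:decomposition}.

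The one step that requires genuine care, rather than mere unpacking of definitions, is the identification in (iii) between $D_j$ (defined along a specific subtour) and the sorted order of first-visit times across the whole instance. This hinges on the nesting of the truncated tours $\OPT_1, \OPT_2, \dots, \OPT_\Gamma$ (each is a prefix of the next) and on disjointness of the subtour time-windows; both are already in hand, so the main risk is simply miscounting points that are revisited in later subtours.
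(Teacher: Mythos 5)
Your proof is correct and follows exactly the route the paper takes, which the paper leaves implicit: the lemma is stated immediately after the paragraph noting that the two definitions of $D_j$ (the $j$-th sorted first-visit time of $\OPT'$, and the completion time of the $j$-th point on the first subtour that visits at least $j$ points) coincide, with no separate proof offered because (i) and (ii) are direct readings of the construction and (iii) rewrites $\OPT' \leqslant (1+\eps)\OPT$ from Lemma~\ref{lem:decomposition}. You have simply spelled out the nesting-and-disjointness argument that justifies the coincidence of the two definitions of $D_j$, which is the one point the paper asserts without proof.
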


Now consider any tour that satisfies properties (i) and (ii) and let $D_j$ be defined as in (iii). Then, clearly the $j$-th completion time is no more than $D_j$. Hence,  the lemma shows that we may restrict to solutions which have properties $(i)$ and $(ii)$ and among those tours minimize  $\sum_{j=1}^n D_j$ as defined in $(iii)$.  We shall prove that minimizing $\sum_{j=1}^n D_j$ can be done easily by dynamic programming if we have an algorithm for the following subproblem.

\subsection{The subproblem}
\begin{definition}An instance of the subproblem is given by $i\in\{1,\dots,\Gamma\}$  and numbers $m'\leqslant m''\in \{0,1,\dots,n\}$. A solution is a tour that starts at the origin at time $t_i$ and returns before time $t_{i+1}$ and visits exactly  $m''$ points. The value of a solution is the sum of completion times of points  $m'+1,\dots,m''$ on this tour (which is zero if $m'=m''$). The objective is to find a solution with minimum value.  Note that an instance $i,m',m''$ may not have a feasible solution. For any feasible instance, let $\SUB_i(m',m'')$ be its optimal value.\end{definition}

Let $m_i$ be the number of points visited by the partial solution $\OPT_i$. Then clearly, 
\[
\OPT'\geqslant \sum_{i=1}^{\Gamma}\SUB_i(m_{i-1},m_i).
\]

\begin{definition}
An $(\alpha,\beta)$ approximation algorithm for the subproblem is an algorithm that finds for any \emph{feasible} instance  $(i,m',m'')$ a tour that starts in the origin at time $\alpha t_i$ and ends in the origin before  time $\alpha t_{i+1}$, visits exactly $m''$ points, and for which the total completion time of the points $m'+1,\dots,m''$ is at most  $\alpha\beta\SUB_i(m',m'')$.
\end{definition}

Assume we have an  $(\alpha,\beta)$-approximation algorithm $\ALG$ for the subproblem. Let $\Alg_i(m',m'')$ be the value returned by the algorithm for instance $(i,m',m'')$ and let it be infinite if no solution was found. 
For any sequence of integers $0\le\hat{m}_1\leqslant \dots \le\hat{m}_{\Gamma}=n$ we get a tour of total completion time 
\begin{equation}\label{eq:total_ALG}
\sum_{i=1}^{\Gamma}\alg_i(\hat{m}_{i-1},\hat{m}_i)\leqslant \alpha\beta \sum_{i=1}^{\Gamma}\SUB_i(\hat{m}_{i-1},\hat{m}_i)
\end{equation}
by concatenating the tours $\alg_i(\hat{m}_{i-1},\hat{m}_i)$.
Minimizing the left side of~\eqref{eq:total_ALG} over all values $0\le\hat{m}_1\leqslant \dots \le\hat{m}_{\Gamma}=n$ is easy since they form a non-decreasing sequence. To be precise, let $\Alg_1(m'')=\alg_1(0,m'')$ for all $m''\leqslant n$ and for $k=2,\dots,\Gamma$, let 
\[\Alg_{k}(m'')=\min_{m'\leqslant m''}\Alg_{k-1}(m')+\alg_k(m',m'').\] 
Then, the minimum is given by $\Alg_{\Gamma}(n)$.  Let the values $\hat{m}_i$ minimize the left side of~\eqref{eq:total_ALG} and let $m_i$ be the number of points visited by the partial solution $\OPT_i$. Then, we find a solution of total completion time at most  
\begin{eqnarray*}
\alpha\beta \sum_{i=1}^{\Gamma}\SUB_i(\hat{m}_{i-1},\hat{m}_i)&\le& \alpha\beta \sum_{i=1}^{\Gamma}\SUB_i(m_{i-1},m_i)\\
&\leqslant & \alpha\beta\OPT'\le\alpha\beta(1+\eps)\OPT.
\end{eqnarray*}
The number of subproblems is $O(\Gamma n^2)$ and given all approximate values, the dynamic programming takes $O(\Gamma n^2)$ time. Further, the number of choices for $h_0$ is $K$ (See Equation~\ref{eq:h0}). Hence, it takes only $O(K\Gamma n^2)=\OO(n^2\log n)$ calls to the approximation algorithm for the subproblem to get an $\alpha\beta(1+\eps)$-approximation for the \TRP.\bigskip

\paragraph{Approximating the subproblem.} We show how to obtain an $(\alpha,1+\eps)$-approximation for the subproblem if we have an $\alpha$-approximation algorithm for the segmented TSP.
Let $(i,m',m'')$ be a feasible instance of the subproblem.
For $h=0\dots K$, define \emph{time-point}
\begin{equation}\label{eq:t_ij}
t_i^{(h)}=(1+\eps)^h t_i,\   (\text{Hence, } t_i^{(K)}=t_{i+1}^{(0)}=t_{i+1}.)
\end{equation}
Recall the definition of the  segmented TSP problem.  A polynomial number of segmented TSP instances is solved (approximately). Let $l_h=t_i^{(h)}-t_i$, $h=1,\dots,K$ and hence, these are fixed given the index $i$. The numbers $n_h$ take all possible integer values for which $n_1\leqslant n_2\leqslant \dots \leqslant n_K=m''$.  This gives $O(n^K)$  instances. Solve all these instances by some $\alpha$-approximate  segmented TSP algorithm and determine the solution with smallest total completion time of the points $m'+1,\dots,m''$. Let $T$ be this solution and let $T'$ be the solution $T$ started at time $\alpha t_i$. We show that $T'$ is an $(\alpha,1+\eps)$-approximation for the subproblem  $(i,m',m'')$. 

The length of $T$ is at most $\alpha(t_{i+1}-t_i)$. Hence, $T'$ completes before time $\alpha t_{i+1}$. Also, it visits exactly $m''$ points. Now let $\alg_i({m}_{i-1},{m}_i)$ be the value of $T'$ for subproblem $(i,m',m'')$.
Consider an optimal solution $\Pi$ for subproblem $(i,m',m'')$ and let $I$ be the segmented TSP instance given by the numbers $n_h$, where $n_h$ is the number of points visited by $\Pi$ until time $t_i^{(h)}$. 
Let $C_j^{\Pi}$ be the $j$-th completion time in $\Pi$ and let $C_j^{I}$ be the $j$-th completion time of $I_{\Pi}$  as defined in Definition~\ref{def:CIj}. Then, 
\[(1+\eps)C_j^{\Pi}\geqslant (t_i+C_j^{I}).\]
Instance $I$ is among the enumerated instance. Hence, using Lemma~\ref{lem:CT le lambda CI},
\begin{eqnarray*}
\alg_i({m}_{i-1},{m}_i)&\leqslant &(m''-m')\alpha t_i+ \alpha \sum_{j=m'+1}^{m''}C_j^I\\
&= &\alpha\sum_{j=m'+1}^{m''}(t_i+C_j^I)\\
&\leqslant &\alpha \sum_{j=m'+1}^{m''} (1+\eps)C_j^{\Pi}\\
&= & (1+\eps)\alpha\SUB_i(m_{i-1},m_i).
\end{eqnarray*}

\paragraph{Running time.}  
For each subproblem,  $O(n^K)$  segmented TSP instances are solved and we simply store the best one. There are $\OO(n^2\log n)$ instances for the subproblem and the dynamic program runs in $\OO(n^2\log n)$ time. 
Hence, the total running time is $n^{O(K)}=n^{O(1/\eps^2)}$ multiplied by the running time of the $\alpha$-approximation algorithm for the segmented TSP.

\section{Approximating the  segmented TSP}\label{sec:K-TSP}
By Theorem~\ref{th:main}, any $\alpha$-approximation algorithm for segmented TSP implies a $(1+\eps)\alpha$-approximation algorithm for the \TRP\ in the same metric space. Here, we consider the approximability of segmented TSP in different metric spaces.  Remember the definition of an $\alpha$-approximation algorithm for the segmented TSP problem: It finds a solution such that $n_i$ points are visited before time $\alpha l_i$, where $n_i$ and $l_i$ are given. (For ease of notation we use an index $i$ instead of $h$ as used in the previous section.) When, we assume that the number of segments $K$ is constant, then we may guess the number of points visited on each of the segments.
More precisely, we denote by segment $i$, the path that runs between distance $l_{i-1}$ (excluded) and $l_i$ (included). We guess the numbers $\mu_i$ of points visited on segment $i$, where  $n_i=\sum_{j=1}^i \mu_j$ for all $i$. The number of choices is only $O(n^{{K}})$, which is polynomial if $K$ is a constant. Further, we assume that all $l_i$ are integer and denote $\lambda_i=l_{i}-l_{i-1}$. Hence, from now, we assume that the segmented TSP instance is given by numbers $\lambda_i$ and $\mu_i$ ($i=1,\dots,K$) and we need to visit \emph{exactly} $\mu_i$ points on the $i$-th segment.

N.B. By `guessing' we mean enumerating over all possible values and we say that we are \emph{able to guess} a certain value if the number of possible values is polynomialy bounded.

\subsection{Edge-weighted tree}
The metric space is given by a tree $T$ with non-negative integer weights on the edges. The distance between any two points $u,v$ is the length of the unique path between $u$ and $v$ on $T$. The TSP is trivial on trees since a tour is optimal if and only if it is a depth-first search on $T$. Also, the $k$-TSP can easily be solved by dynamic programming: For each vertex $v$ and number $j\leqslant k$, store the length $l(v,i)$ of the shortest tour in the subtree rooted at $v$ which visits exactly $j$ vertices. The value is easily computed from the table of values of the children of $v$.

The generalization to segmented TSP is straightforward.
First, turn the tree into a rooted binary tree such that only leaves need to be visited. This can be done with only a constant factor increase in the number of points by adding edges of length zero. For each node $v$ unequal to the root we define a \emph{vector of crossing information} as follows.
The edge above $v$ is traversed at most $2K$ times.
This gives at most $K$ subtours in the tree rooted at $v$ which start and end at $v$. For each of these we guess the start time and end time and we guess the number of points that each of the $K$ segments have on this subtour. For all possible vectors we only store if this is feasible or not. A vector is feasible if it can be obtained from feasible vectors of its two children. For any leaf, a vector is feasible if there is exactly one subtour and the  start time equals its end time and it contains exactly one vertex (namely $v$). Note that the time of visit determines the segment that $v$ is on. For the root we only consider the case of one  subtour starting at time $0$ and ending at time $l_K=\sum_{i=1}^K \mu_i$ and for which segment $i$ contains exactly $\mu_i$ points. The running time is $n^{O(K)}$.

\subsection{Euclidean plane}\label{sec:Euclidean}
We show that for any feasible segmented TSP instance we can find a $(1+\eps)$-approximate solution in time $n^{O(K)}$. That means, the solution is a concatenation of $K$ paths, where the $i$-th path has length at most $(1+\eps)\lambda_i$ and visits exactly $\mu_i$ points. It is important to note that the $\epsilon$ used in this section has nothing to do with the $\epsilon$ of Section~\ref{sec:Reducing-to-Ktsp}. That means, in this section, $K$ is an arbitrary integer constant.

Arora and Karakostas~\cite{AroraKara2003} give a quasi polynomial time approximation scheme for the \TRP\ in the Euclidean plane. (See also~\cite{Arora2003}.)
The algorithm in~\cite{AroraKara2003} is based on the refined TSP-PTAS~\cite{Arora1998}, which is more efficient than the simpler version that was published earlier~\cite{Arora1996}. In the latter paper, it was shown that here is a $(1+\eps)$-approximate TSP tour that crosses the boundary of each square in the quadtree only $O(\log n /\eps)$ times. In the refined PTAS, it was proven that  $O(1/\eps)$ crossings satisfy too. In combination with Lemma~\ref{lem:lognpaths} this led the authors of~\cite{AroraKara2003} to a TRP algorithm with $n^{O(\log n/\eps^2)}$ running time. The proof contains many details but intuitively it does follow easily from the next three observations: (i) all lengths are polynomially bounded, (ii) the solution is composed of $O(\log n/\eps)$ TSP-paths, and (iii) there are only $O(1/\eps)$ crossings per square per TSP-path. Hence, for a given square we can afford to guess for each crossing basically all information that we want and still end up with quasi-polynomial running time.
In the segmented TSP problem, the solution is composed of only $K$ TSP paths. Hence, for constant $K$ we should expect a better running time. A minor issue is that we have a restriction on the length of each of the $K$ segments. This is easily solved by using Markov's inequality, as we show below in the discussion of the structure theorem. 
Our PTAS for Euclidean segmented TSP applies even if we adopt the simpler TSP PTAS~\cite{Arora1996} that allows $O(\log n/\eps)$ crossings of the dissection squares.
The TSP-PTAS~\cite{Arora1996,Arora1998,Arora2003} contains numerous details. Here we only address those that are of interest for our modification and refer to the survey~\cite{Arora2003} for omitted details.

\paragraph{Structure theorem} The rounding of the instance and the construction of the quadtree and portals remains basically the same: Take the smallest  bounding box and define a grid of polynomial dimension. Move input points to the middle of grid cells. Then, place an enclosing box of double side length at random on top of it. Next, make the dissection tree. The depth is $\OO(\log n)$. We let the number of portals for each dissection square be $O(\log n/\eps)$.
By scaling distances, we may assume that for each grid cell and segment $i$, the part of the segment that lies inside the cell has integer length.

For the Euclidean  TSP problem it is known~\cite{Arora2003} that there is a tour $\Pi$ that crosses the boundary of each dissection square only in portals, and at most twice in each portal, and for which the expected length is at most $(1+\eps)$ times optimal. The same is true for the Traveling Salesman Path problem~\cite{Arora2003}. The expectation is over the random shift of the enclosing  box. More precisely, for any path of length $S$ in the bounding box, the  expected length of the detour that is needed to make it portal respecting is $\eps S$.
Now consider a feasible segmented TSP instance given by numbers $\lambda_i$ and $\mu_i$. It follows directly that there is solution $\TT$ that crosses only at portals and each portal at most $2K$ times such that each segment $i$ visits $\mu_i$ points and has length $L_i\geqslant \lambda_i$ and such that $\expected[L_i-\lambda_i]\leqslant \eps\lambda_i$. Again, the expectation is over the random shift of the box. Note that this is not enough for our purpose since we want each of the $K$ differences  to be at most $\eps\lambda_i$ simultaneously. Since $K$ is constant, this is easily solved by Markov's inequality: $\Pr[L_i-\lambda_i\geqslant 2K\eps\lambda_i]\leqslant 1/(2K)$ for each $i$. Then, by the union bound, $\Pr[L_i-\lambda_i\geqslant 2K\eps\lambda_i \text{  for at least one }i]\leqslant 1/2$. Hence, in stead of an expected $(1+\eps)$-approximate solution we get a $(1+2K\eps)$-approximate solution with probability at least $1/2$. The additional factor $2K$ is no issue since $K$ is a constant.
(Again, remember that $K$ is an absolute constant independent of $\epsilon$ in this section.)

\paragraph{Dynamic Programming} Note that in the dynamic programming we do not solve an optimization problem but only search for a feasible solution. An instance $I$ of a subproblem in the DP is given by:
\begin{enumerate}
\item[(1)] A dissection square $S$.
\item[(2)] For each segment $i$, the number of points and the length of segment $i$ inside $S$.
\item[(3)] For each portal of $S$ and all segments $i$, the  number of times segment $i$ crosses it (0,1,or 2) and in which direction (in or out).
\item[(4)] For each segment $i$, the first and last crossing with $S$ are specified.
\item[(5)] A pairing of the crossings with $S$.
\end{enumerate}
Note that we only guess the length and number of points for each segment and not for each crossing as was done in~\cite{AroraKara2003}. Hence, we can afford $\OO(\log n)$ crossings.
Clearly, the number of choices for items (1)--(4) is $n^{O(K/\eps)}$. The pairing of the crossings can be done almost independently for each segment since we know for each crossing the segment it belongs to and we know the first and last crossing of each segment. Hence, the number of pairings is bounded by $2^{O(\log n/\eps)K}=n^{O(K/\eps)}$.

First, consider the base case. By the rounding step, all points coincide and are in the middle of the cell. Clearly, it would be optimal to serve all these  by the same segment. However, we assumed the number of points on each segment $i$ to be given by $\mu_i$. Hence, we should allow the midpoint to be visited by multiple segments. Clearly, each segment needs to  cross the midpoint at most once. Feasibility can be checked in  $O(Km)$ time, where $m=O(\log n/\eps)$ is the number of portals per square. For the smallest dissection square containing the root vertex we have the additional restriction that segment 1 starts in the root and segment $K$ ends in the root.

Consider an arbitrary (non-base) instance $I$ given by (1)--(5). We check if there is a feasible instance for each of its children which together are consistent with instance $I$. That means, the number of points and lengths should add up to the right value and all crossing and pairings should be consistent. Further, one needs to exclude combinations that form subtours. For each instance $I$ there are $n^{O(K/\eps)}$ combinations of instances for its four children to check. The time for checking a single combination is only linear in the number of portals. 
For the largest square (the enclosing box)  we only need to verify one instance: the $(1+\eps)$-approximation, i.e., there are no crossings and segment $i$ has length $(1+\eps)\lambda_i$ and contains exactly $\mu_i$ points.

\section{Generalizations and variants}

The approximation schemes for TRP in $\RR^2$ and weighted trees apply as well if \emph{release dates} are added. The transformation from $\OPT$ to $\OPT'$ works still fine in that case since the solution is only moved forward in time. Hence, $\OPT'$ is feasible and the total completion time is increased by at most a factor $1+\eps$.
In the reduction to segmented TSP, we need to consider segmented TSP instances with release dates. By rounding release times (by at most a factor $1+\eps$)  we may assume that points are released only at the start times of the $K$ segments. Equivalently,  we may assume that we have sets $S_1\subseteq S_2\subseteq \dots \subseteq S_K$ of points such that the $j$-th segment can only visit points from $S_j$. In the dynamic programs, except for the base case, we do not consider  which points are visited but only store the number for each of the segments. The base case can still be efficiently solved since the number of segments is constant.

The PTAS applies as well if our objective function is a linear combination of total completion time, $\sum_{j} C_j$, and the length of the path. That means, the problem is to find a path, starting in the origin, that minimizes $\alpha\sum_{j} C_j +\beta \max_{j} C_{j}$ for some $\alpha,\beta\geqslant 0$. To see this, define the path $\OPT'$ in exactly the same way. For any input point $p$ we have  $\expected[C'(p)]\leqslant (1+\eps)C(p)$. In particular, this applies to the last point on the path. Hence, $\expected[\max_{j} C'_{j}]\leqslant (1+\eps)\max_{j} C_{j}$, where $C_{j}$ ($C_{j'}$) is the $j$-th completion time in $\OPT$ ($\OPT'$). In total we get that
\[\expected[\alpha\sum_{j} C'_j +\beta \max_{j} C'_{j}]\leqslant (1+\eps)(\alpha\sum_{j} C_j +\beta \max_{j} C_{j}).\]
In the algorithm we guess $h_0$ for which the inequality above holds without expectation. Also, we guess the corresponding length $L'$ of the tour $\OPT'$. Then we apply the same DP but we restrict to tours of length at most $(1+\eps)L'$.

In the \emph{Randomized Search Ratio} problem one has to find a (random) path starting from the root $r$ and visiting all points and the goal is to minimize $\max_v \expected[C(v)]/d(r,v)$, where $d(r,v)$ is the distance from $r$ to $v$ .  In~\cite{AroraKara2003}, the authors mention that E. Tardos observed the following: If the minimum latency problem has a
PTAS for a certain class of metrics, then the randomized search ratio problem has an
approximation scheme for that same class of metrics. Thus, our PTAS implies a PTAS for the randomized search ratio for trees and the Euclidean plane.

The PTAS also applies to the \emph{The $k$-repairman problem} in which one needs to find $k$ repairman paths that together visit all points.  The transition from $\OPT$ to $\OPT'$ is the same: All repairman are in the origin at the same time. In the segmented TSP we need to find $k$ segmented TSP-paths simultaneously. For constant $k$, there is only a polynomial increase in the running time.

\subsection{Open problems}
\paragraph{Weighted completion times}
The generalization to weighted completion times is straightforward if weights are polynomially bounded. However, for general weights it is not clear how to adjust the approximation scheme. 

\paragraph{Metric embedding on a line}
Another interesting problem that is closely related is that of finding a metric embedding on a line such that the average distortion is minimized~\cite{DhamdhereGR2006}. One can show that the $k$-TRP with $k=2$ is a special case of this metric embedding problem. 
The authors of~\cite{DhamdhereGR2006} use ideas of the QPTAS for the traveling repairman problem to obtain a QPTAS for the average distortion problem. It is not clear whether our ideas can be used to obtain a PTAS for metric line-embedding as well.

\paragraph{Weighted planar graphs}
In~\cite{AroraKara2003} the authors remark that their quasi-PTAS for Euclidean TRP carries over directly to weighted planar graphs by using the PTAS for TSP on weighted planar graphs by Arora et al.~\cite{AGKKW}. This claim turned out to be incorrect (Karakostas, personal communication, 2014). For the TSP PTAS the separator is a Jordan curves that divides the graph into an exterior and interior part. The number of portals is $m = O(\log n/\eps^2)$ and each portal is crossed at most twice. Hence, the situation here is similar to the Euclidean case. However, in the planar case the graph is first reduced by contracting some of the edges. Uncontracting the edges increases the length of the tour by at most a factor $1+\epsilon$. This is fine for the TSP but is problemetic for the TRP: Uncontracting edges early in the TRP-path may cause a large increase in the total completion time. This issue was overlooked in~\cite{AroraKara2003} and in earlier versions of this report~\cite{Sitters:14}. Getting a PTAS or even a quasi-PTAS for the TRP in weighted planar graphs remains an open problem. Moreover, this also holds for the $k$-TSP problem in weighted planar graphs.

\section{Single machine scheduling under precedence constraints}

The reduction used for the TRP applies to almost any problem of minimizing the total (weighted) completion, assuming that weights are polynomially bounded. Of course, this doesn't mean that it is always useful since the subproblem may be harder to approximate than the original. First, we give a rough sketch how to apply it to the simple scheduling problem $1|r_j|\sum C_j$ and then give a detailed proof for the more challenging problem of scheduling under precedence constraints. A PTAS for the first was given by Afrati et al~\cite{Afrati:99}. \bigskip

\textbf{Example}:$1|r_j|\sum C_j$. We have a single machine and $n$ jobs with processing times $p_j$ and release times $r_j$ for $j=1,\dots,n$. The objective is to find a schedule that minimizes the total completion time $\sum_j C_j$, where $C_j$ is the completion time of job $j$. Now, the subproblem is defined on an interval from $t_i$ to $(1+\eps)^Kt_i$ for some $i$, and where $K$ depends on $\eps$ only. For given $n'\leqslant n''$, the problem is to find a feasible schedule on a subset of the jobs that minimizes $\sum_{j=n'+1}^{n''} C_j$. Now partition the interval in $K$ subintervals as before where the ratio of start and end time of a subinterval is $1+\eps$. Hence, we may assume that jobs are released only at the beginning of subintervals. Say that a job is large if its processing time is more than $\eps$ times the length of the smallest subinterval (which is the first). Then, the number of large jobs in the optimal solution to the subproblem is bounded by a constant and we guess all of them. The small jobs can be  added greedily such that each subinterval is overpacked by at most $\epsilon$ times its length. \qed\bigskip

One of the most intriguing scheduling problems is that of minimizing total weighted completion times on a single machine under precedence constraints. ($1|prec|\sum_{j} w_{j}C_{j}$, in the notation by Graham et al~\cite{GrahamLLR1979}.) The problem is known to be \NP-hard~\cite{Lawler1978,LenstraRK1978} and several 2-approximation algorithms are known. The paper by Amb\"uhl et al.~\cite{AmbuhlMMS2011} gives a recent overview on the status of this problem. Exact polynomial time algorithms are known for some special cases, e.g., for series parallel possets~\cite{Lawler1978}. Surprisingly, interval ordered precedence constraints are not one of these. Woeginger~\cite{Woeginger2003} gave a 1.62-approximation algorithm and a $3/2$-approximation was given by Amb\"uhl et al~\cite{AmbuhlMMS2011}. The same paper shows that scheduling interval orders is in fact \NP-hard. Here, we give a polynomial time approximation scheme for interval ordered precedence constraints.

An instance of the  scheduling problem is given by $n$ jobs to be processed on a single machine that can process at most one job at a time. Each job $j$ has a nonnegative integer processing time $p_j$ and weight $w_j$. A partial order on the jobs defines the precedence constraints between jobs. That means, if $j_1\prec j_2$, then job $j_1$ must be completed before $j_2$ can start. The goal is to find a non-preemptive schedule that minimizes $\sum_{j=1}^n w_jC_j$, where $C_j$ is the completion time of job $j$.

\begin{definition}
A partial order on a set $J$ is an interval order if there is a function that assigns to each $j\in J$ a closed interval $[l_{j},r_{j}]$ such that $j_1\prec j_2$ if and only if $r_{j_1}< l_{j_2}$.
It is easy to see that for any interval order there is a corresponding set of intervals for which all $2|J|$ endpoints are different.
\end{definition}

A theorem by Woeginger~\cite{Woeginger2003} states that for general precedence constraints, we may restrict our approximation analysis to the case $1\leqslant p_{j}\leqslant n^{2}$ and $1\leqslant w_{j}\leqslant n^{2}$, where $n$ is the number of jobs. In fact, this theorem  can be applied to the special case of interval orders since its proof only reverses the precedence constraints, and since the reverse of an interval order is again an interval order (see \cite{Woeginger2003}).

\subsection{Reducing the problem}\label{sec:redcution-for-scheduling}
The reduction is almost the same as what we did for the TRP problem. Let $K=O(1/\eps^{2})$ and choose $h_0$ uniformly at random from $\{0,1,\dots,K-1\}$. The numbers $A_{i}$ are as before.
Consider an optimal solution, $\OPT$, and let $\OPT_i$ be the solution restricted to the jobs that complete not later  than $A_i$, for $i=1,2,\dots $.
The  schedule $\OPT'$ is defined by simply concatenating all the solutions $\OPT_{i}$. Note that jobs appear multiple times since any job that appears in $\OPT_{i}$ appears as well in $\OPT_{i'}$ for all $i'\geqslant i$. In general, we allow jobs to appear more than once and call these \emph{pseudo schedules}. The completion times and precedence constraints apply only to the first appearance of each job.  

As before, denote by $t_{i}$ the time at which $\OPT_{i}$ starts in $\OPT'$. The solution $\OPT'$ is well-defined if $t_{i}\geqslant t_{i-1}+A_{i-1}$ for all $i$. Let $t_{i}=c A_{i-1}$, then 
$c A_{i-1}\geqslant c A_{i-2}+A_{i-1}$  holds if $c/(c-1)\geqslant A_{i-1}/{A_{i-2}}$. For any constant $c>1$ we can choose $K$ such that $c/(c-1)\geqslant A_{i-1}/A_{i-2}=\delta^K$.
For simplicity, let us just take $t_{i}=3A_{i-1}$ as before. This creates unnecessary idle time but at least we can blindly copy the analysis of the TRP. Let $C_{j}$ ($C'_{j}$) be completion time of job $j$ in $\OPT$ ($\OPT'$). Then, following the proof of Lemma~\ref{lem:decomposition}, we have for any job $j$ that
\[\expected[C_{j}']\leqslant (1+\eps)C_{j},\]
where the expectation is over the random choice of $h_{0}$. Taking the weighted sum we have
\[\expected[\OPT']=\expected[\sum_{j}w_{j}C_{j}']\leqslant (1+\eps)\sum_{j}w_{j}C_{j}=(1+\eps)\OPT.\]
From now assume that $h_{0}$ is chosen such that the inequality holds without expectation: $\OPT'\leqslant (1+\eps)\OPT$.

We call the schedule between two consecutive time points $t_{i}$ a \emph{subschedule}. Note that in $\OPT'$, each subschedule is a feasible schedule on its own. The total weight of jobs in the $i$-th subschedule is the weight completed by $\OPT_i$ and hence, is non-decreasing in $i$. 
Let $W=\sum_{j}w_{j}$. Then, $W\leqslant n^{3}$, since $w_j\leqslant n^2$ for all $j$.  For any $w\in \{1,2,\dots,W\}$, let $D^{w}$ be the first moment at which $\OPT'$ completes a total weight of at least $w$ (where for any job we only count the weight of its first appearance and the weight is only counted when the job completes.) Equivalently, we may define $D^{w}$ as the first moment at which some subschedule completes a total weight of at least $w$.
Then,
\begin{equation}\label{eq:OPT'D^w}
\OPT'=\sum_{j}w_{j}C'_{j}=\sum\limits_{w=1}^{W} D^{w}.
\end{equation}
The properties of the pseudo schedule $\OPT'$ are listed in the next lemma.
\begin{lemma}\label{lem:propertiesSchedule}
Solution $\OPT'$ has the following properties:	
\begin{enumerate}
\item[(i)] 
No job is processed at time $t_i$ and the subschedule between time points $t_i$ and $t_{i+1}$ is a feasible schedule on itself. Here, $t_i=3(1+\eps)^{(i-2)K+h_0}$ for  all $i=1,2,\dots, \Gamma$, where $K=O\left(\frac{1}{\eps^2}\right)$, $\Gamma=O(\eps\log n)$ and $h_0$ is some fixed number in $\{0,1,\dots, K-1\}$.
\item[(ii)] The total weight of jobs scheduled  in the $i$-th subschedule is non-decreasing in $i$.
\item[(iii)]
For any $w\in \{1,2,\dots,W\}$, define $D^{w}$ as the first moment at which some subschedule completes a total weight of at least $w$. Then,
$\sum\limits_{w=1}^{W} D^{w}\leqslant (1+\eps)\OPT$.
\end{enumerate}
\end{lemma}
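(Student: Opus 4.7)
The plan is to verify each of the three properties of $\OPT'$ using the explicit construction above and the deterministic bound $\OPT'\leq(1+\eps)\OPT$ that has already been derived after fixing a favorable $h_0$. Properties (i) and (ii) are structural consequences of how the subschedules are glued together, while (iii) is a rewriting of the weighted completion time sum into a sum over unit weight tokens, combined with the established bound.

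For (i), I would check two things: subschedule $i$ fits inside $[t_i,t_{i+1}]$, and it is feasible on its own. The fit follows because $\OPT_i$ has length at most $A_i$ while the interval has length $3A_i-3A_{i-1}=3A_i(1-\delta^{-K})$, which is at least $A_i$ whenever $\delta^K\geq 3$, as guaranteed by $K=O(1/\eps^2)$. Since the slack is strict, nothing is processed exactly at $t_i$. Standalone feasibility is inherited from $\OPT$: if a job completes by $A_i$ in $\OPT$ then so do all its precedence predecessors, hence $\OPT_i$ is itself a valid schedule. Property (ii) is then immediate, since the $i$-th subschedule is a copy of $\OPT_i$ and the job sets satisfy $\OPT_i\subseteq\OPT_{i+1}$, so the total weight can only grow.

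For (iii), the core step is the identity $\sum_{w=1}^{W}D^{w}=\sum_j w_j C'_j$ from~\eqref{eq:OPT'D^w}. I would establish this by regarding each job's weight $w_j$ as a bundle of unit tokens and sorting all $W$ tokens by the first-appearance completion time of their carrying job; the $w$-th token is then completed at exactly time $D^w$, and summing over tokens recovers $\sum_j w_j C'_j=\OPT'$. Combining with $\OPT'\leq(1+\eps)\OPT$ then gives $\sum_{w}D^w\leq(1+\eps)\OPT$, which is exactly (iii).

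The only subtle point I anticipate is reconciling the two formulations of $D^w$ in the text, namely ``the first moment $\OPT'$ completes a total weight of at least $w$'' versus ``the first moment some subschedule completes a total weight of at least $w$''. These agree because each job's first appearance lies in exactly one subschedule and because weight accumulates monotonically across subschedules through the nested sets $\OPT_1\subseteq\OPT_2\subseteq\cdots$. Once this bookkeeping is noted, no further work is required: the derandomization and the inequality $\OPT'\leq(1+\eps)\OPT$ have already been carried out in the paragraphs preceding the lemma.
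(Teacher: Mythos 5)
Your argument reproduces the paper's reasoning: the paper has no separate proof environment for this lemma, but the discussion immediately preceding it establishes (i) via the spacing condition $t_i\geq t_{i-1}+A_{i-1}$ (which your strict-slack check for $\delta^K\geq 3$ certainly covers, and in fact $\delta^K\geq 3/2$ already suffices for the scheduling case, as the paper notes before deciding to ``blindly copy'' the TRP constant 3), (ii) via the nesting $\OPT_i\subseteq\OPT_{i+1}$, and (iii) via the identity $\OPT'=\sum_j w_jC'_j=\sum_{w=1}^W D^w$ combined with the derandomized bound $\OPT'\leq(1+\eps)\OPT$. Your token-sorting derivation of that identity and your note reconciling the two formulations of $D^w$ are exactly the bookkeeping the paper leaves implicit, so the proposal is correct and follows essentially the same route.
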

Now consider any pseudo schedule that satisfies (i) and (ii) and let $D^{w}$ be as defined in $(iii)$ and let $C^{w}$ be the moment that the schedule completes a total weight of at least $w$. Then (using~\ref{eq:OPT'D^w})
\[\sum_j w_jC_j=\sum\limits_{w=1}^{W} C^{w}\leqslant \sum\limits_{w=1}^{W} D^{w}.\]
(Equality holds for $\OPT'$.) Hence, we may restrict to pseudo schedules which have properties $(i)$ and $(ii)$ and among those, minimize  $\sum\limits_{w=1}^{W} D^{w}$ as defined in $(iii)$.  This can be done approximately by dynamic programming as before if we have an approximation algorithm for the following subproblem on subschedules.\bigskip

\paragraph{Subproblem} An instance of a subproblem is given by $i\in\{1,\dots,\Gamma\}$ and numbers $w'\leqslant w''\in \{0,1,\dots,n^{3}\}$. A solution is a schedule that starts at time $t_i$ and completes before time $t_{i+1}$ and completes a total weight of at least $w''$. let $C^{w}$ be the moment that the schedule completes a total weight of at least $w$. The objective is to minimize $\sum_{w=w'+1}^{w''}C^{w}$. 
Note that an instance $(i,w',w'')$ may not be feasible. For any feasible instance, let $\SUB_i(w',w'')$ be its optimal value.

\begin{definition}
An $(\alpha,\beta)$ approximation algorithm for the subproblem is an algorithm that finds for any \emph{feasible} instance  $(i,w',w'')$ a schedule that does not start before time  $\alpha t_i$ and ends before  time $\alpha t_{i+1}$, completes a total weight of at least  $w''$, and for which $\sum_{w=w'+1}^{w''}C^{w}\leqslant \alpha\beta\SUB_i(w',w'')$.
\end{definition}

Note that the total weight $w''$ is not approximated in the definition above. For example, completing a total weight of $(1-\eps)w''$ is not sufficient to obtain a PTAS.

Assume we have an  $(\alpha,\beta)$-approximation algorithm $\ALG$ for the subproblem. Let $\Alg_i(w',w'')$ be the value returned by the algorithm for instance $(i,w',w'')$ and let it be infinite if no solution was found. 
For any sequence of integers $0\le\hat{w}_1\leqslant \dots \le\hat{w}_{\Gamma}=W$ we get a pseudo schedule  of total weighted completion time 
\begin{equation}\label{eq:total_ALGschedule}
\sum_{i=1}^{\Gamma}\alg_i(\hat{w}_{i-1},\hat{w}_i)\leqslant \alpha\beta \sum_{i=1}^{\Gamma}\SUB_i(\hat{w}_{i-1},\hat{w}_i)
\end{equation}
by concatenating the schedules $\alg_i(\hat{w}_{i-1},\hat{w}_i)$.
Minimizing the left side of~\eqref{eq:total_ALGschedule} over all values $0\le\hat{w}_1\leqslant \dots \le\hat{w}_{\Gamma}=W$ is easy since they form a non-decreasing sequence and the minimum can be computed by a simple dynamic program similarl to what was done for the TRP. Let the values $\hat{w}_i$ minimize the left side of~\eqref{eq:total_ALGschedule} and let $w_i$ be the total weight in the partial solution $\OPT_i$. Then, algorithm $\ALG$ finds a solution of total weighted completion time at most  
\begin{eqnarray*}
\alpha\beta \sum_{i=1}^{\Gamma}\SUB_i(\hat{w}_{i-1},\hat{w}_i)&\le& \alpha\beta \sum_{i=1}^{\Gamma}\SUB_i(w_{i-1},w_i)\\
&\leqslant & \alpha\beta\OPT'\le\alpha\beta(1+\eps)\OPT.
\end{eqnarray*}
The number of subproblems is $O(\Gamma n^6)$ and given all approximate values, the optimal values $\hat{w}_i$
can be computes in $O(\Gamma n^6)$ time. Further, the number of choices for $h_0$ is $K$ (See Equation~\ref{eq:h0}). Hence, it takes  $O(K\Gamma n^6)=\OO(n^6\log n)$ calls to the approximation algorithm for the subproblem to get an $\alpha\beta(1+\eps)$-approximation for our scheduling problem. \bigskip

\subsection{Approximating the subproblem.}\label{sec:approx sub} We show how to get a $(1+\eps,1+\eps)$-approximation for the subproblem. In this section, we fix an arbitrary subproblem with parameters $i,w',w''$ and fix an optimal solution $\SUB^*$.
Again, the first step is to partition the interval from $t_i$ till $t_{i+1}$ into $K$ parts that we shall denote as \emph{slots}. As before (Equation~\eqref{eq:t_ij}), let
\[
t_i^{(h)}=(1+\eps)^h t_i, \text{ for } h=0,\dots, K.
\]
From now, the approach will differ from what we did for the TRP. The general idea is as follows.
Since the number of slots in a subschedule is a constant $K$, and all weights and processing times are polynomially bounded, we can afford to guess a lot of information about $\SUB^*$. We shall do this in such a way that the remaining jobs can be scheduled greedily.
For the ease of analysis, we extend $\SUB^*$ by putting all unscheduled jobs at the end. We say that they are scheduled in a virtual slot $K+1$.
Now, for each job $j$ we guess a set of possible slots $S_{j}\subseteq\{1,2,\dots,K+1\}$ with the following properties:
\begin{itemize}
\item[(P1)] Any job $j$ in $\SUB^*$ completes in some slot in $S_{j}$. (It may start in an earlier slot though.)
\item[(P2)] If $j_1\prec j_2$ then $\max (S_{j_1})\leqslant \min (S_{j_2})$.
\end{itemize}

The first property is easily satisfied. For example, if we let $S_{j}$ be the set of all $K+1$ slots for each $j$. The second property is implied by the interval order precedence constraints as we shall prove in Lemma~\ref{lem:interval-properties} below. After this lemma, we prove that we get a PTAS for any class of precedence constraints for which  we can prove (P1) and (P2) and for which we may restrict to polynomially bounded weights and processing times. Roughly speaking, the consequence of (P1) and (P2) is that we only need to deal with precedence constraints within a slot. However, within a slot any order of the jobs that satisfies the precedence constraints is fine since all completion times are within a factor $(1+\eps)$.

\begin{lemma}\label{lem:interval-properties} For interval orders, we can guess sets $S_{j}$ for $j=1,2,\dots,n$, that satisfy  properties (P1) and (P2).
\end{lemma}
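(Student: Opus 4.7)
Plan. The strategy exploits the defining feature of interval orders: $\mathrm{Pred}(j)=\{j': r_{j'}<l_j\}$ depends only on $l_j$, so predecessor sets are linearly ordered by inclusion, and $j_1\prec j_2$ implies $l_{j_1}\leqslant r_{j_1}<l_{j_2}$. Let $D_h$ denote the (unknown) set of jobs of $\SUB^*$ completed by the end of slot $h$, with $D_0=\emptyset$ and $D_{K+1}=J$; each $D_h$ is downward-closed in $\prec$ (a predecessor of a job completed by time $t_i^{(h)}$ has itself completed by then).

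For each boundary $h\in\{1,\ldots,K\}$ I would guess two thresholds: $L_h\in\{-\infty\}\cup\{l_j:j\in J\}$ and $R_h\in\{-\infty\}\cup\{r_j:j\in J\}$, with intended meanings $L_h=\max\{l_j:j\in D_h\}$ and $R_h=\max\{r_j:j\in D_h\}$. Setting $L_0=R_0=-\infty$, $L_{K+1}=\max_j l_j$, $R_{K+1}=\max_j r_j$, and keeping only guesses where $L_h$ and $R_h$ are non-decreasing in $h$, the total number of candidate tuples is $n^{O(K)}$, polynomial for constant $K$. Given such a guess, define
\[h^+(j)=\min\{h\geqslant 1: l_j\leqslant L_h\text{ and }r_j\leqslant R_h\},\quad h^{**}(j)=\min\{h\geqslant 1: L_h>r_j\}\]
(with the convention $h^{**}(j)=K+1$ when no such $h$ exists) and set $S_j=\{h^+(j),\,h^+(j)+1,\,\ldots,\,h^{**}(j)\}$, a contiguous range of slots.

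Property (P2) then holds for every non-decreasing guess: if $j_1\prec j_2$, then $l_{j_2}>r_{j_1}$, so for any $h\geqslant h^+(j_2)$ we have $L_h\geqslant l_{j_2}>r_{j_1}$, hence $h\geqslant h^{**}(j_1)$; this gives $\max(S_{j_1})=h^{**}(j_1)\leqslant h^+(j_2)=\min(S_{j_2})$. Property (P1) is verified on the canonical guess $L_h=\max\{l_j:j\in D_h\}$, $R_h=\max\{r_j:j\in D_h\}$, which lies among the enumerated tuples. If $j$ completes in slot $h^*_j$, then $j\in D_{h^*_j}$, so $l_j\leqslant L_{h^*_j}$ and $r_j\leqslant R_{h^*_j}$, yielding $h^+(j)\leqslant h^*_j$. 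For the upper bound, if $h^{**}(j)\leqslant K$, then at $h=h^{**}(j)$ some $j^*\in D_h$ satisfies $l_{j^*}>r_j$, meaning $j^*\succ j$, and downward-closure of $D_h$ forces $j\in D_h$, giving $h^*_j\leqslant h^{**}(j)$.

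The main hurdle is that a single threshold per boundary does not suffice: two incomparable jobs with $l_j<l_{j'}$ can be completed in either relative order in $\SUB^*$, so guessing only $L_h$ (or only $R_h$) cannot recover each job's slot. The two-threshold scheme succeeds because $(L_h,R_h)$ jointly encodes how far $\SUB^*$ has advanced both in the predecessor-set chain and in the right-endpoint order, and the strict gap $r_{j_1}<l_{j_2}$ forced by interval-order precedence is exactly what makes $h^{**}(j_1)\leqslant h^+(j_2)$ close without any additional slack.
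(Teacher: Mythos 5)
Your proof is correct and follows essentially the same strategy as the paper's: guess $O(K)$ pieces of structural information about $\SUB^*$ (with only polynomially many choices since $K$ is constant), use them to define the sets $S_j$, and verify (P1) on the canonical guess and (P2) unconditionally using the interval-order inequality $j_1\prec j_2\Rightarrow r_{j_1}<l_{j_2}$. The only real difference is the encoding: the paper guesses, for each slot $h$, a pivot job $j^h$ with the largest left endpoint among jobs completing in $h$ (and then derives $S_j$ from four conditions involving these pivots), whereas you guess a monotone sequence of cumulative left-endpoint thresholds $L_h=\max\{l_j:j\in D_h\}$ and read off $S_j=\{h^+(j),\dots,h^{**}(j)\}$ directly; your variant is arguably a bit cleaner and avoids the case analysis. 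One small remark: the right-endpoint thresholds $R_h$ are not actually needed. In your (P2) argument they play no role, and in your (P1) argument $j\in D_{h^*_j}$ already gives $l_j\leqslant L_{h^*_j}$, hence $h^+(j)\leqslant h^*_j$, without invoking $R_h$. The closing paragraph's motivation for the second threshold (that a single threshold ``cannot recover each job's slot'') is not the right concern --- (P1) only asks that the true slot lies in the \emph{range} $S_j$, not that it be pinned down exactly --- so you could simply drop $R_h$ and shorten the argument.
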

\begin{proof}
Let $[l_j,r_j]$ be the interval for job $j$ in the interval order. As noted, we may assume that the $2n$ values $l_j,r_j$ are all different. For any $h\in \{1,\dots,K+1\}$, let $J^{h}$ be the set of jobs that complete in slot $h$ in $\SUB^*$. Note that $J^{h}$ may be empty. For each non-empty set, guess the job $j^{h}$ with the largest value $l_{j^{h}}$, i.e., $l_{j^{h}}=\max \{l_{j}\mid j\in J^{h}\}$ and define $S_{j^{h}}=\{h\}$. (Note that there are $n^{O(K)}$ possible guesses.)  
For any other job, the set $S_{j}$ is defined as the unique maximal subset of $\{1,\dots,K+1\}$  that satisfies the following four necessary conditions.
\begin{itemize}
\item[(a)] If for some $h$, the guess was $J^{h}=\emptyset$, then $h\notin S_j$.
\item[(b)] If $j\prec j^{h}$ for some slot $h$, then $\max (S_j)\leqslant h$.
\item[(c)] If $j^{h}\prec  j$ for some slot $h$, then $\min(S_j)\geqslant h$.
\item[(d)] If $l_{j}>L^{h}$ for some $h$, then $h\notin S_{j}$.
\end{itemize} 
(In (c), one might replace $\geqslant h$ by $\geqslant h+1$ since $h\notin S_j$ follows from (d).)
Assume that we guessed all jobs $j^{h}$ correctly. Then, property (P1) follows directly since the conditions (a)--(d) are clearly necessary. To prove (P2) assume that $j_1\prec j_2$. We distinguish three cases:\\
Case 1:
 $j_2=j^h$ for some $j^h$. It follows from (b) that $\max (S_{j_1})\leqslant h = \min (S_{j_2})$, since $S_{j_2}=\{h\}$.\\
Case 2: $j_1=j^h$ for some $j^h$. It follows from (c) that $\min (S_{j_2})\geqslant h= \max (S_{j_1})$, since $S_{j_1}=\{h\}$.\\
Case 3: Now assume that $j_1,j_2\neq j^h$ for any $j^h$. Let $h=\min (S_{j_2})$. Then by (a), $J^{h}\neq\emptyset$. Then by (d), $l_{j_2}<l_{j^{h}}$. It follows form $j_1\prec j_2$ that $r_{j_1}<l_{j_2}<l_{j^{h}}$, Hence, $j_1\prec j^h$ and then (b) implies $\max (S_{j_1})\leqslant h=\min (S_{j_2})$.
\qed\end{proof}\bigskip

Assume from now on that we have sets $S_j$ satisfying (P1) and (P2).
\bigskip

\paragraph{Constructing the schedule} We will construct a $(1+\eps,1+\eps)$-approximate schedule $\sigma$. 
The construction is done as follows. First, we assign each job $j$ to some slot in $S_j$. Jobs that are not assigned to any of the first $K$ slots are implicitly assigned to the virtual slot $K+1$.
The slots $1,2,\dots,K$ are placed one after the other in this order, startingat time $(1+\eps)t_i$, and  within a slot the jobs are placed in any arbitrary order that satisfies the precedence constraints.  
By property (P2), the resulting schedule $\sigma$ is guaranteed to be feasible.
The word \emph{slot} is ambiguous here since the start and end time of slots in $\sigma$ are not fixed and do not match those of $\SUB^*$. We will show however that in the final schedule $\sigma$, the end time of slot $h$ is at most a factor $1+\eps$ larger than that of slot $h$ in $\SUB^*$.

Say that a job is \emph{large} if its processing time is at least $f(\eps) t_{i}$, where $f(\eps)$ is some function of $\eps$ to be specified later. Call it \emph{small} otherwise. Since there can only be a constant number (depending on $\eps$) of large jobs scheduled in $\SUB^*$ we
\begin{itemize}
\item  guess all large jobs together with the slot ($1,\dots,K$) in which they complete  in $\SUB^*$ and assign a job to slot $h$ in $\sigma$  if it completes in slot $h$  in $\SUB^*$.
\end{itemize}
It remains to assign the small jobs.
Note that there are at most $2^{K+1}$ different sets $S_j$. For any $S\subseteq\{1,\dots,K+1\}$, let $J_S=\{j\mid S_j=S \text{ and $j$ is small}\}$. 
\begin{itemize}
\item For every pair $(S,h)$, with $h\in \{1,\dots,K\}$ we guess the total processing time over all jobs $j\in J_S$ which complete in $\SUB^*$ in slot $h$. Let $P(S,h)$ be this value.
\end{itemize}
For each $S\subseteq \{1,\dots,K+1\}$ place the jobs in $J_S$ in non-decreasing order $w_j/p_j$ and do the following: 
\begin{itemize}
\item For slots $h=1$ to $K$, assign jobs from $J_S$ in order $w_j/p_j$ to slot $h$ until the total processing time of jobs from $J_S$ assigned to $h$ becomes at least $P(S,h)$ or until all jobs from $J_S$ are assigned.
\end{itemize}
Given this assignment of jobs to slots, we schedule jobs within each slot in an arbitrary order that satisfies the precedence constraints. Note that there are only $\OO(1)$ large jobs and we can guess all of them together with their slots. Also, the number of pairs $(S,h)$ is $2^{K+1}K=\OO(1)$ and for each pair, the number of possible values $P(S,h)$ is $O(n^3)$ since $p_j\leqslant n^2$ for all $j$. Hence, the total number of choices for the guesses is $n^{\OO(1)}$.  Let $\sigma$ be the schedule that follows from correct guesses about $\SUB^*$. 

\begin{lemma}
Schedule $\sigma$ is a $(1+\eps,1+\eps)$-approximation for the subproblem.
\end{lemma}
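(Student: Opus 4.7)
The plan is to verify each clause in the definition of a $(1+\eps,1+\eps)$-approximation applied to $\sigma$. By construction $\sigma$ starts at $(1+\eps)t_i$; feasibility under the precedence constraints is immediate from property (P2), since $j_1\prec j_2$ forces $\max(S_{j_1})\leqslant\min(S_{j_2})$, so $j_1$'s slot precedes or equals $j_2$'s and any linear extension within a slot completes the job. What remains are (a) bounding the end of slot $h$ in $\sigma$ by $(1+\eps)t_i^{(h)}$ for every $h$, which yields both the makespan bound and the per-job completion-time bound, and (b) showing that the total weight $W_h(\sigma)$ completed by the end of slot $h$ in $\sigma$ is at least $W_h(\SUB^*)$ for every $h$.

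For (a) I would argue by induction on $h$. The total processing assigned to slots $1,\dots,h$ in $\sigma$ equals the total large-job processing of $\SUB^*$ in those slots (matched exactly by the guess) plus $\sum_{S}$ of small-job contributions. For each $S$ the cumulative small-job contribution from $J_S$ through slot $h$ exceeds $\sum_{h'\leqslant h}P(S,h')$ by at most the size of the final small job added in each slot, and hence by at most $h\cdot f(\eps)t_i$. Summing over the at most $2^{K+1}$ sets $S$ yields a total overshoot of at most $h\cdot 2^{K+1}f(\eps)t_i$. Since $\SUB^*$ packs its cumulative load into $[t_i,t_i^{(h)}]$ and $\sigma$ starts $\eps t_i$ later, the end of slot $h$ in $\sigma$ is at most $t_i^{(h)}+\eps t_i+K\cdot 2^{K+1}f(\eps)t_i$. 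Using $t_i^{(h)}\geqslant(1+\eps)t_i$ and choosing $f(\eps)\leqslant\eps^2/(K\cdot 2^{K+2})$ makes this $\leqslant(1+\eps)t_i^{(h)}$; setting $h=K$ also proves the makespan bound.

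For (b), large jobs contribute equally on both sides by correct guessing. For small jobs in a fixed $J_S$, both schedules devote essentially the same cumulative processing budget $\sum_{h'\leqslant h}P(S,h')$ to slots $1,\dots,h$ (with $\sigma$ slightly more, which only helps), and processing jobs in Smith's order (non-increasing $w_j/p_j$) is exactly what maximizes the completed weight under a prefix processing budget. Hence $J_S$'s contribution to $W_h(\sigma)$ is at least its contribution to $W_h(\SUB^*)$; summing over $S$ gives $W_h(\sigma)\geqslant W_h(\SUB^*)$, and in particular $W_K(\sigma)\geqslant W_K(\SUB^*)\geqslant w''$. For any weight unit $w\in\{w'+1,\dots,w''\}$, let $h$ and $h^*$ be its completion slots in $\sigma$ and $\SUB^*$ respectively; the $W_h$ inequality forces $h\leqslant h^*$, so
\[
C^w(\sigma)\leqslant(1+\eps)t_i^{(h)}\leqslant(1+\eps)t_i^{(h^*)}=(1+\eps)^2 t_i^{(h^*-1)}\leqslant(1+\eps)^2 C^w(\SUB^*),
\]
and summing over $w$ yields $\sum_{w=w'+1}^{w''}C^w(\sigma)\leqslant(1+\eps)^2\SUB_i(w',w'')$. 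The main subtlety is the Smith-rule prefix comparison: one must check carefully that the at-most-one-job overshoot per pair $(S,h)$ only strengthens the inequality for $\sigma$, and that the exchange argument applies even though $\sigma$ and $\SUB^*$ generally select different subsets of $J_S$ within the first $h$ slots.
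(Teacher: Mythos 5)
Your proposal is correct and follows essentially the same route as the paper's proof: feasibility from (P2), a slot-endpoint bound of $(1+\eps)t_i^{(h)}$ obtained from the at-most-one-small-job overshoot per pair $(S,h)$ together with a suitably small $f(\eps)$, and a Smith's-rule exchange argument giving $W_h(\sigma)\geqslant W_h(\SUB^*)$ and hence $C^{w}(\sigma)\leqslant (1+\eps)^2 C^{w}(\SUB^*)$. The only cosmetic differences are that you use an aggregate overshoot bound $K\cdot 2^{K+1}f(\eps)t_i$ with $f(\eps)\leqslant \eps^2/(K\cdot 2^{K+2})$ where the paper bounds each slot separately with $f(\eps)=\eps^2 2^{-(K+1)}$ and appeals to slot $1$ being shortest, and that you correctly state the Smith order as \emph{non-increasing} $w_j/p_j$ where the paper's text reads ``non-decreasing'' (evidently a typo).
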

\begin{proof}
By property (P2) and since we scheduled jobs within a slot in an order satisfying the precedence constraint, the schedule is feasible. 

Next we show that slot $h$ in $\sigma$ ends before time $(1+\eps)t_i^{(h)}$. 
Let $P_{\sigma}(S,h)$ be the total processing time of jobs from $J_S$ which are assigned to slot $h$ in $\sigma$. Further, let $P(h)$ be the total processing time of jobs that complete in slot $h$ in $\SUB^*$ and let $P_{\sigma}(h)$ be the total processing time of jobs assigned to slot $h$ in $\sigma$.  
Remember that a job is small if its processing time is at most $f(\eps)t_i$. By the greedy assignment of small jobs we have that
\[ P_{\sigma}(S,h)\leqslant  P(S,h)+ f(\eps)t_i.\]
The number of possible sets $S$ is $2^{K+1}$. Now, take $f(\eps)=\eps^2 \cdot 2^{-(K+1)}$. Then,
\[ P_{\sigma}(h)\leqslant  P(h)+ 2^{K+1}f(\eps)t_i=P(h)+\eps^2t_i.\]
Slot $1$ in $\SUB^*$ has length $\eps t_i$ and the total processing time assigned to slot 1 is at most $P(1)+\eps^2 t_i\leqslant \eps t_i+\eps^2 t_i=(1+\eps)\eps t_i$.
Slot 1 is the smallest slot in $\SUB^*$. Hence, in general, the total time assigned to the first $h$ slots is at most $(1+\eps)$ times the length of the first $h$ slots in $\SUB^*$. That means, slot $h$ in $\sigma$ ends before time $(1+\eps)t_i+(1+\eps)(t_i^{(h)}-t_i)=(1+\eps)t_i^{(h)}$. 

Next, we prove the bound on the value of the schedule $\sigma$. Take arbitrary  $S\subseteq \{1,\dots,K+1\}$. If $\SUB^*$ completes a total weight  $w$ of jobs form $J_S$ by the end of slot $h$, then our schedule will have completed at least the same weight of jobs form $J_S$ by the end of slot $h$ too, since we scheduled the jobs in $w_j/p_j$ order.
For any $w\in \{1,\dots, w''\}$, let $C^{*w}$ be the time at which $\SUB^*$ completes a total weight of at least $w$. Consider arbitrary $w$ and assume that time $C^{*w}$ falls in slot $h$. Then our schedule completes a total weight of at least $w$ before the end time of slot $h$. Hence, before time $(1+\eps)t_i^{(h)}=(1+\eps)^2t_i^{(h-1)}<(1+\eps)^2C^{*w}$. In particular, this applies to any $w\in \{w'+1,\dots,w''\}$. Hence,
\[\ALG_i(w',w'')\leqslant (1+\eps)^2\SUB_i(w',w'').\]\qed
\end{proof}

The PTAS for interval ordered precedence constraints can easily be adjusted to deal with release dates. First, the release dates may be  rounded such that jobs are released at the beginning of slots. Next, the release date restrictions are added to the sets $S_j$ as defined in the proof of Lemma~\ref{lem:interval-properties}. The rest remains the same.

\providecommand{\bysame}{\leavevmode\hbox to3em{\hrulefill}\thinspace}
\providecommand{\MR}{\relax\ifhmode\unskip\space\fi MR }
\providecommand{\MRhref}[2]{%
  \href{http://www.ams.org/mathscinet-getitem?mr=#1}{#2}
}
\providecommand{\href}[2]{#2}

\end{document}